\journalname{Procedia Computer Science}
\newcommand{\pref}[1]{\prettyref{#1}}
\newcommand{\savehyperref}[2]{\texorpdfstring{\hyperref[#1]{#2}}{#2}}
\newcommand{\calRO}{\mathcal{O}}
\newcommand{\calD}{\mathcal{D}}
\newcommand{\distpara}{W^{\calD}}
\newcommand{\truepara}{W^{\calRO}}
\newcommand{\E}{\mathbb{E}}
\newcommand{\wtilde}{\widetilde}
\newtheorem{assumption}{Assumption}[section]
\newcommand*{\circled}[1]{\lower.7ex\hbox{\tikz\draw (0pt, 0pt)%
    circle (.5em) node {\makebox[1em][c]{\small #1}};}}
\newtheorem{thm}{Theorem}[section]
\newtheorem{lem}[thm]{Lemma}
\newtheorem{defi}{Definition}[section]
\begin{document}

\begin{frontmatter}



\renewcommand{\thefootnote}{\fnsymbol{footnote}}

\dochead{}

\title{Deciphering the Interplay between Attack and Protection Complexity in Privacy-Preserving Federated Learning}


\author[label1]{Xiaojin Zhang
\footnote{School of Computer Science and Technology, Huazhong University of Science and Technology, Wuhan, 430074, China. email: xiaojinzhang@hust.edu.cn}}
\address[label1]{Huazhong University of Science and Technology, China}
\author[label1]{Mingcong Xu}
\author[label1]{Yiming Li}
\author[label1]{Wei Chen\footnote{email: lemuria\_chen@hust.edu.cn}}
\author[label5]{Qiang Yang\footnote{Corresponding Author, email:qyang@cse.ust.hk}}
\address[label5]{The Hong Kong Polytechnic University, China}


\begin{abstract}
      Federated learning (FL) offers a promising paradigm for collaborative model training while preserving data privacy. However, its susceptibility to gradient inversion attacks poses a significant challenge, necessitating robust privacy protection mechanisms. This paper introduces a novel theoretical framework to decipher the intricate interplay between attack and protection complexities in privacy-preserving FL. We formally define "Attack Complexity" as the minimum computational and data resources an adversary requires to reconstruct private data below a given error threshold, and "Protection Complexity" as the expected distortion introduced by privacy mechanisms. Leveraging Maximum Bayesian Privacy (MBP), we derive tight theoretical bounds for protection complexity, demonstrating its scaling with model dimensionality and privacy budget. Furthermore, we establish comprehensive bounds for attack complexity, revealing its dependence on privacy leakage, gradient distortion, model dimension, and the chosen privacy level. Our findings quantitatively illuminate the fundamental trade-offs between privacy guarantees, system utility, and the effort required for both attacking and defending. This framework provides critical insights for designing more secure and efficient federated learning systems.
\end{abstract}

\begin{keyword}
  federated learning, privacy, attack complexity, protection complexity, maximum bayesian privacy, tradeoff



\end{keyword}

\end{frontmatter}



\section{Introduction}
Federated learning (FL) has emerged as a pivotal distributed machine learning paradigm, enabling multiple clients to collaboratively train a shared global model without directly sharing their raw local data \cite{mcmahan2017communication}. This inherent characteristic positions FL as a strong candidate for applications where data privacy and ownership are paramount, such as healthcare, finance, and mobile computing. By keeping sensitive data localized, FL aims to mitigate privacy risks associated with centralized data aggregation.

Despite its privacy-preserving promise, FL is not immune to sophisticated privacy attacks. Recent research has demonstrated that sensitive information about clients' private training data can be inferred from the shared model updates, particularly gradients, through techniques known as gradient inversion attacks \cite{zhu2019deep}. These attacks, such as Deep Leakage from Gradients (DLG) and its variants, can reconstruct individual data samples with alarming accuracy, thereby undermining the fundamental privacy guarantees of FL. Consequently, a critical challenge in FL research lies in developing robust privacy protection mechanisms that effectively counter these threats without unduly compromising model utility or system efficiency.

Existing privacy-preserving techniques in FL, including differential privacy (DP) \cite{dwork2014algorithmic}, homomorphic encryption (HE) \cite{gentry2009fully}, and secure multi-party computation (SMC) \cite{mugunthan2019smpai}, aim to obfuscate or encrypt shared information. While effective to varying degrees, these methods often introduce trade-offs: stronger privacy typically comes at the cost of reduced model accuracy, increased computational overhead, or slower convergence \cite{zhang2022no, zhang2024meta, zhang2024game, zhang2023probably, zhang2023towards, zhang2023theoretically}. A comprehensive understanding of these trade-offs, particularly from the perspective of an attacker's effort versus a defender's cost, remains largely unexplored. There is a pressing need for a rigorous theoretical framework that quantifies the difficulty of both attacking and protecting private data in FL, thereby providing actionable insights for system design.

This paper addresses this gap by introducing and formally defining two novel metrics: "Attack Complexity" and "Protection Complexity." Attack complexity quantifies the minimum resources (e.g., attacker iterations, data samples) an adversary requires to reconstruct private data within a specified error tolerance. Conversely, protection complexity measures the amount of distortion introduced by a privacy mechanism to achieve a certain privacy guarantee. By analyzing the interplay between these two complexities, we aim to provide a deeper understanding of the fundamental limits and trade-offs in privacy-preserving federated learning.

We formally define "Attack Complexity" as the minimum number of attacker iterations required to achieve a predefined data reconstruction error threshold, providing a quantitative measure of privacy protection from an adversarial perspective. We introduce "Protection Complexity" to characterize the expected distortion applied to model parameters by privacy mechanisms, offering a direct measure of the privacy cost. Our main contributions are summarized as follows:
\begin{itemize}
    \item We derive tight theoretical bounds for Protection Complexity under the Maximum Bayesian Privacy (MBP) framework. Unlike prior work that primarily relies on Local Differential Privacy (LDP), our analysis directly quantifies distortion in terms of MBP, providing a more general characterization of privacy-utility trade-offs.
    \item We establish comprehensive upper and lower bounds for Attack Complexity, demonstrating its intricate dependency on factors such as privacy leakage, gradient distortion, model dimensionality, and the chosen privacy budget (MBP). These bounds quantitatively illustrate how stronger privacy guarantees elevate the effort required for successful attacks.
    \item We provide a theoretical foundation for understanding the intricate trade-offs between privacy, utility, and the computational resources expended by both attackers and defenders in federated learning environments.
\end{itemize}
The remainder of this paper is organized as follows: Section \ref{sec:related_work} reviews related work. Section \ref{sec:preliminary} introduces the federated learning setting, threat model, and privacy protection mechanisms. Section \ref{sec:defining_complexity} formally defines attack and protection complexity. Section \ref{sec:protection_bounds} and \ref{sec:attack_bounds} present the theoretical bounds for protection and attack complexities, respectively. Finally, Section \ref{sec:conclusion} concludes the paper and outlines future research directions.

\section{Related Work}
\label{sec:related_work}
The landscape of federated learning (FL) research has rapidly expanded, driven by the dual objectives of collaborative intelligence and data privacy. This section reviews key areas related to our work, including the evolution of FL, the emergence of gradient inversion attacks, and existing privacy-preserving mechanisms, highlighting how our proposed framework of attack and protection complexity offers a novel perspective.

\subsection{Federated Learning and Privacy Concerns}
Federated learning, pioneered by Google \cite{mcmahan2017communication}, enables decentralized model training by aggregating local model updates rather than raw data. Early FL algorithms like FedSGD and FedAvg \cite{mcmahan2017communication} demonstrated the feasibility of this approach. While FL inherently offers a degree of privacy by keeping data local, it quickly became apparent that sharing model updates, particularly gradients, could still leak sensitive information. This realization spurred extensive research into privacy-preserving FL. The core challenge lies in balancing model utility with strong privacy guarantees, a trade-off that often dictates the practical applicability of FL systems.

\subsection{Gradient Inversion Attacks}
The vulnerability of FL to gradient leakage was starkly demonstrated by gradient inversion attacks. Deep Leakage from Gradients (DLG) \cite{zhu2019deep} showed that an adversary, given only the shared gradients, could reconstruct the original training data samples, including images and text. Subsequent works, such as improved Deep Leakage from Gradients (iDLG) \cite{zhao2020idlg} and InvertGrad \cite{geiping2020inverting}, refined these techniques, making reconstruction more efficient and accurate, even for batch training. These attacks highlight the critical need for robust defense mechanisms and underscore the importance of quantifying the effort required for such adversarial actions. Our definition of "Attack Complexity" directly addresses this need by formalizing the resources an attacker must expend to achieve a desired reconstruction quality.

\subsection{Privacy Protection Mechanisms in Federated Learning}
To counter gradient inversion attacks and other privacy threats, various protection mechanisms have been proposed. Differential Privacy (DP) \cite{dwork2014algorithmic} is a widely adopted technique that adds carefully calibrated noise to gradients or model parameters before sharing them. This ensures that the presence or absence of any single data record in the training set does not significantly alter the output distribution, thereby providing strong privacy guarantees. Homomorphic Encryption (HE) \cite{gentry2009fully} allows computations on encrypted data, enabling secure aggregation without revealing individual updates. Secure Multi-Party Computation (SMC) \cite{mugunthan2019smpai} enables multiple parties to jointly compute a function over their private inputs without revealing those inputs to each other. While effective, these methods introduce overheads: DP can degrade model accuracy, and HE/SMC incur significant computational and communication costs. Our concept of "Protection Complexity" quantifies the distortion introduced by such mechanisms, providing a metric to assess their inherent cost in terms of utility or information loss. Our work distinguishes itself by introducing a unified framework that explicitly links the \textit{effort} required for an attack to the \textit{cost} of protection. Specifically, our definitions of "Attack Complexity" and "Protection Complexity" provide a novel lens through which to analyze these trade-offs.

\section{Preliminary}\label{sec:preliminary}
We approach the problem from the attacker's perspective, introducing a mathematical framework to analyze the difficulty of reconstructing private data in federated learning. To quantify this difficulty, we define a new metric termed attack complexity, which measures the computational and data resources required for an attacker to achieve a given reconstruction error threshold \( \tau \). Through theoretical analysis, we derive an upper bound on attack complexity, allowing us to investigate its relationship with key variables such as the reconstruction error threshold \( \tau \), the number of attacker iterations \( T \), and the size of the local dataset \( |D^{(k)}| \).

This section aims to present the necessary symbols and background for understanding the concepts discussed later in the paper. It introduces the federated setup, the types of attackers considered, and the privacy protection mechanisms employed in this work. Table \ref{tab:notation} summarizes the symbols used throughout this paper.

\begin{longtable}{cc}
\caption{Notation} \label{tab:notation} \\
\toprule
\textbf{Symbol} & \textbf{Description} \\
\midrule
\( S_k(\tau) \) & Attack complexity of client \( k \) under error threshold \( \tau \) \\
\( \tau \) & Predefined error threshold \\
\( d \) & Error function (distance metric), such as MSE, SSIM, PSNR \\
\( T \) & Total number of attacker iterations \\
\( \tilde{X}^{(k)}_{t,i} \) & The \( i \)-th data sample reconstructed by the attacker for client \( k \) in iteration \( t \) \\
\( X^{(k)}_{i} \) & The \( i \)-th real data sample in client \( k \) \\
\( D^{(k)} \) & Local dataset of client \( k \) \\
\( \epsilon_p^{(k)} \) & Privacy leakage degree of client \( k \) \\
\( \truepara \) & Original gradient  \\
\( \distpara \) & Perturbed gradient  \\
\( \Delta^{(k)} \) & Gradient distortion degree of client \( k \) \\
\( \gamma^{(k)} \) & Confidence parameter of client \( k \) \\
\bottomrule
\end{longtable}

\subsection{Federated Learning Setting}

The objective of this study is to investigate horizontal federated learning, where multiple participants collaboratively train a global model while maintaining the privacy of their local datasets. In horizontal federated learning, each participant's local dataset is denoted as \( D^{(k)} \), and the optimization goal of the federated learning process is to:

\[
\min_{\theta} \sum_{k=1}^{K} \frac{m_k}{m} \mathcal{L}_k(\theta)
\]

where \( \theta \) represents the global model parameters, \( \mathcal{L}_k(\theta) \) is the loss function for the \( k \)-th participant, \( m_k \) is the size of the local dataset for the \( k \)-th participant, and \( m \) is the total size of all datasets. The objective is to minimize the weighted loss across all participants' local models, ensuring both privacy preservation and model performance.

Several algorithms have been proposed to solve this optimization problem, including FedSGD, FedAvg, and FedProx. The theoretical foundation of this paper primarily builds upon the FedSGD algorithm, whose main process is as follows:

\begin{enumerate}
    \item Each client computes the gradient of the loss function with respect to its local model parameters.
    \item The client sends the computed gradients to the central server.
    \item The server aggregates the received gradients, typically by averaging, and updates the global model parameters.
    \item The updated global model is sent back to the clients for further local training.
\end{enumerate}

While FedSGD is a classic and widely used algorithm, previous studies have shown that it may lead to data leakage from the clients. Specifically, during the training process, an attacker could exploit the parameters exchanged between the client and the server to infer the local data of individual clients. In the following sections, we will introduce the attack strategies considered in this study, as well as the defense mechanisms implemented to mitigate these risks.

\subsection{Threat Model}

As mentioned earlier, attackers may exploit the parameters exchanged during federated learning to recover client data. In this study, we treat the server as the attacker, whose goal is to utilize the parameter information sent by client \( k \) to accurately recover any client’s private data. We assume that the server is a semi-honest attacker, meaning it will follow the federated learning protocol faithfully for parameter exchange and updates -- this is the "honest" part. However, it may attempt to investigate the clients' information to recover private data.

More specifically, the attacker can exploit the gradients sent by clients to infer sensitive information about their local datasets. Existing methods for achieving such attacks include Deep Leakage Gradient (DLG), improved Deep Leakage Gradient (iDLG), and InvertGrad, among others. In this paper, the attacker uses the DLG method, with the optimization goal being:

\[
\min_{\bar{x}, \bar{y}} \left\| \nabla \mathcal{L}_k (x,y) - \nabla \mathcal{L}_k (\bar{x}, \bar{y}) \right\|_2
\]

where \( \nabla \mathcal{L}_k (x,y) \) represents the gradient of the loss function for client \( k \) with respect to the true local data \( x,y \), and $\nabla \mathcal{L}_k (\bar{x}, \bar{y})$ represents the estimated model parameters based on the information inferred. The attacker aims to minimize the distance between the actual gradients and the gradients obtained from the server's inferred data, thereby reconstructing the client's local data.

\subsection{Privacy Protection Mechanism via Parameter Distortion}
As mentioned earlier, the server can exploit the parameters transmitted by clients to reconstruct their private data. To mitigate this risk, previous studies have proposed various protection techniques, such as perturbation and encryption, to secure the transmitted parameters. A classic perturbation method involves adding Differential Privacy noise to the gradient information, while encryption techniques include homomorphic encryption and secure multi-party computation. Collectively, these methods are referred to as parameter distortion. In this study, the privacy protection method considered is gradient perturbation.

Under these protection mechanisms, the federated training process (e.g., using FedSGD) is modified as follows: the client receives the parameters sent by the server, performs local training to compute updated gradients, and then perturbs the gradients (e.g., by adding noise) before uploading them. This protects against potential gradient inversion attacks by the server. The server, in turn, utilizes the perturbed gradients to update the global model parameters and sends the updated parameters back to the clients, repeating the process iteratively.

It is important to note that while gradient perturbation protects data privacy, the introduced perturbations may degrade model performance or reduce efficiency. Therefore, balancing the magnitude of the perturbations to achieve a trade-off between utility and privacy remains a critical challenge.

\section{Defining Attack Complexity and Protection Complexity}\label{sec:defining_complexity}
Attack complexity is described as the number of iterations \( T \) required by the attacker to ensure that the distance between the reconstructed data and the real data \( d \leq \tau \), where \( \tau \) is the predefined error threshold. In this context, attack complexity assists in assessing the degree of data privacy protection. Qualitatively, the larger the attack complexity, the better the privacy protection.

The quantitative analysis of attack complexity can be conducted as follows:

\begin{enumerate}
    \item \textbf{Define Distance Metric:}
    
    Choose an appropriate \textbf{distance metric} to measure the distance \( d \) between the reconstructed data and the real data. For instance, in image reconstruction tasks, metrics such as Mean Squared Error (MSE), Structural Similarity Index (SSIM), or Peak Signal-to-Noise Ratio (PSNR) can be used.
    
    \item \textbf{Define Attack Complexity:}
\end{enumerate}

\begin{defi} [Attack Complexity]  
\label{definition1}
Define the error function \( d \colon \mathbb{R}^m \times \mathbb{R}^m \to \mathbb{R}^+ \), representing the distance between the reconstructed data and the real data (e.g., MSE). Under the error threshold \( \tau \), the attack complexity is defined as:

\[
S_{k}(\tau) = \min \left\{ T \mid \mathbb{E}_{D^{(k)}} \left[ \frac{1}{|D^{(k)}|} \sum_{i=1}^{|D^{(k)}|} \frac{1}{T} \sum_{t=1}^{T} d \left( \tilde{X}^{(k)}_{t,i}, X^{(k)}_{i} \right) \right] \leq \tau \right\}.
\]

Where:
\begin{itemize}
    \item \( T \) is the total number of attack iterations.
    \item \( \tilde{X}^{(k)}_{t,i} \) represents the \( i \)-th reconstructed data sample for client \( k \) generated by the attacker in iteration \( t \).
    \item \( d(\cdot) \) is the error metric function used to measure the discrepancy between \( \tilde{X}^{(k)}_{t,i} \) and \( X^{(k)}_{i} \). The appropriate metric (e.g., MSE, SSIM, PSNR) should be chosen based on the task.
    \item \( \tau \in \mathbb{R}^+ \) is the preset error threshold, indicating the maximum allowable average error.
    \item \( \mathbb{E}_{D^{(k)}} \) denotes the expectation over the randomness of the local dataset \( D^{(k)} \).
\end{itemize}
\end{defi}

The attack complexity definition establishes a quantitative relationship between data reconstruction error \( d \) and the strength of privacy protection, from the perspectives of attack iterations. Specifically, \( S_{k}(\tau) \) characterizes the minimum number of iterations \( T \) required for the attacker to achieve a reconstruction error below the given threshold \( \tau \), reflecting how privacy protection constrains the attack complexity. Through this definition, the trade-off between privacy protection, client data size, attacker computational resources, and reconstruction error is clearly revealed. This relationship provides theoretical guidance for designing privacy-preserving mechanisms, particularly in optimizing attacker costs.

Next, we define the notion of protection complexity, which characterizes the level of perturbation applied to original model parameters to achieve privacy protection.

\begin{defi} [Protection Complexity]  \label{definition:protection-complexity}
Let \( \truepara \in \mathbb{R}^m \) denote the original (unperturbed) model parameter vector of a client, and let \( \distpara \in \mathbb{R}^m \) denote the corresponding perturbed vector after applying a protection mechanism. Then the protection complexity is defined as the expected distortion introduced by the mechanism:
\[
C = \mathbb{E} \left[ \| \distpara - \truepara \|_2^2 \right].
\]
\begin{itemize}
    \item \( \truepara \) is the original model parameter or gradient.
    \item \( \distpara \) is the perturbed version released to the server.
    \item \( \| \cdot \|_2^2 \) denotes the squared Euclidean distance.
    \item The expectation is taken over the randomness of the protection mechanism.
\end{itemize}
\end{defi}

This definition reflects the amount of information distortion induced by the protection mechanism. A higher protection complexity generally corresponds to stronger privacy guarantees but may come at the cost of reduced utility or degraded convergence performance.

\section{Protection Complexity Bounds for Protection Mechanisms with Parameter Distortion}\label{sec:protection_bounds}

{In this work, we adopt MBP as the foundational privacy notion for analyzing protection complexity. This decision is guided by several critical considerations.}

{First, MBP provides a natural measure of privacy for inference attacks, such as DLG, by quantifying how much an adversary’s belief about the private data is updated after observing the released outputs~\cite{zhang2022no}. Compared to traditional privacy notions such as LDP, which measure output distribution differences over worst-case neighboring datasets, MBP instead captures the discrepancy between the prior and posterior distributions which makes MBP more suitable for settings like federated learning, where attackers often exploit background knowledge and model parameter to perform inference.}

{Second, MBP admits estimation algorithms that approximate posterior updates from observed outputs, and recent work has provided formal analyses of the associated estimation errors~\cite{zhang2023meta}. In contrast, the privacy analysis of LDP relies on worst-case sensitivity estimation and  LDP lacks a general mechanism for analyzing the estimation error of its privacy guarantees. These limitations make MBP a more analytically flexible and practically applicable framework for evaluating privacy in federated learning settings.}

\paragraph{The Adversary}
We assume that the adversary knows the conditional distribution used by the attacker. That is, the adversary knows the distribution $f_{\distpara|\truepara}(w^d|w^o)$. Equipped with his/her prior belief on the private data, the posterior belief distribution of the attacker is then approximated.

\paragraph{Reconstruction}
Upon observing $\distpara$, the adversary reconstructs $\truepara$. Given the reconstructed model parameter, the private data could be reconstructed using the optimization algorithm such as DLG.

First, we revisit and incorporate several key definitions and results proposed in \cite{zhang2024deciphering}, which provide a theoretical foundation for connecting LDP with MBP.

\subsection{Theoretical Relationship between MBP and LDP}

\begin{defi}[$\epsilon$-Local Differential Privacy, adapted from \cite{zhang2024deciphering}]
A randomized function $\mathcal{M}$ is said to be $\epsilon$-locally differentially private if, for all possible datasets $d$ and $d'$, and for all subsets $S$ of the output space, the following condition holds:
\begin{equation}
    \frac{\Pr[\mathcal{M}(d) \in S]}{\Pr[\mathcal{M}(d') \in S]} \leq e^{\epsilon}.
\end{equation}
Here, $\epsilon \geq 0$ quantifies the level of privacy protection, and $\Pr[\mathcal{M}(d) \in S]$ denotes the probability that the randomized mechanism $\mathcal{M}$ outputs an element in the set $S$ when applied to dataset $d$.
\end{defi}

This definition ensures that the output distribution of the randomized mechanism remains statistically similar across any two individual inputs, thereby protecting user-level data contributions against inference attacks.

To model privacy from a probabilistic inference perspective, \cite{zhang2024deciphering} further introduced the notion of MBP, which measures the adversary's ability to update their belief about the data upon observing system outputs.

\begin{defi}[$\xi$-Maximum Bayesian Privacy, adapted from \cite{zhang2024deciphering}]\label{def:mbp}
Let $F$ denote the belief distribution held by the attacker regarding the private data, and let $f$ be the associated probability density function. A system is said to satisfy $\xi$-Maximum Bayesian Privacy if, for any $w \in \mathcal{W}$ and $d \in \mathcal{D}$, the following inequality holds:
\begin{equation}
    e^{-\xi} \leq \frac{f_{D|W}(d \mid w)}{f_D(d)} \leq e^{\xi}.
\end{equation}
\end{defi}

This criterion quantifies the disparity between the posterior and marginal distributions of the data. A smaller $\xi$ implies that an adversary's observation has minimal influence on the belief update, indicating stronger privacy guarantees.

Importantly, \cite{jiang2018context} established a generalized connection between LDP and MBP, showing that these two privacy notions can be quantitatively related under certain conditions on the prior knowledge of the adversary.

\begin{lem}[Theoretical Relationship between MBP and LDP, adapted from theorem 1 of ~\cite{jiang2018context}]\label{Relationship between MBP and LDP}
If an algorithm satisfies $\xi$-LDP, then it also satisfies $\xi $-MBP. Conversely, if an algorithm satisfies $\xi$-MBP, then it satisfies $2\xi$-LDP.
\end{lem}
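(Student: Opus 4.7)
The plan is to prove both directions by rewriting the posterior-to-prior ratio in the MBP definition as a likelihood ratio via Bayes' rule, after which each implication reduces to a short algebraic manipulation of the other notion's defining inequality. Writing $W$ for the randomized output of the mechanism and $D$ for the private input, Bayes' rule gives
\begin{equation*}
\frac{f_{D|W}(d\mid w)}{f_D(d)} \;=\; \frac{f_{W|D}(w\mid d)}{f_W(w)},
\end{equation*}
where $f_W(w)=\int f_{W|D}(w\mid d')\,f_D(d')\,\der d'$. I will work with the density form of $\xi$-LDP, i.e.\ $f_{W|D}(w\mid d)/f_{W|D}(w\mid d')\le e^{\xi}$ for all $w,d,d'$, which is the standard pointwise version of the set-based inequality in the definition.

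For the forward direction, LDP $\Rightarrow$ MBP, I would assume $\xi$-LDP and fix $w$ and $d$. Integrating the inequality $f_{W|D}(w\mid d')\le e^{\xi}f_{W|D}(w\mid d)$ against the prior $f_D(d')$ gives $f_W(w)\le e^{\xi}f_{W|D}(w\mid d)$, and integrating the reverse inequality yields $f_W(w)\ge e^{-\xi}f_{W|D}(w\mid d)$. Dividing through and invoking the Bayes identity above produces $e^{-\xi}\le f_{D|W}(d\mid w)/f_D(d)\le e^{\xi}$, which is exactly $\xi$-MBP.

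For the converse, MBP $\Rightarrow$ $2\xi$-LDP, the same Bayes identity turns $\xi$-MBP into the sandwich $e^{-\xi}f_W(w)\le f_{W|D}(w\mid d)\le e^{\xi}f_W(w)$, valid for every $w$ and every input $d$. Applying the upper bound at $d$ and the lower bound at $d'$ and dividing,
\begin{equation*}
\frac{f_{W|D}(w\mid d)}{f_{W|D}(w\mid d')} \;\le\; \frac{e^{\xi}f_W(w)}{e^{-\xi}f_W(w)} \;=\; e^{2\xi},
\end{equation*}
which, after integrating over any measurable output set $S$, recovers $2\xi$-LDP.

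The main obstacle I anticipate is not the algebra but the measure-theoretic care needed to pass between the density form of the privacy definitions and their set-based probability form, particularly when $f_W(w)$ may vanish on part of the output space or when the prior $f_D$ fails to be absolutely continuous on all of $\mathcal{D}$. Handling these degeneracies cleanly, for instance by restricting to the support of $W$ and invoking the LDP condition at the event level rather than pointwise, is the delicate step. It also clarifies why the reverse direction incurs the factor of two: the bound on the likelihood ratio is obtained by chaining two MBP bounds through the common reference density $f_W(w)$, each contributing an $e^{\xi}$.
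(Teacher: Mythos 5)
Your proof is correct. Note that the paper does not prove this lemma itself---it is imported directly from Theorem~1 of the cited reference---but your Bayes-rule argument (rewriting $f_{D|W}(d\mid w)/f_D(d)$ as $f_{W|D}(w\mid d)/f_W(w)$, averaging the pointwise LDP bound against the prior to get the forward direction, and chaining two MBP bounds through the common marginal $f_W(w)$ to get the factor $e^{2\xi}$ in the converse) is exactly the standard derivation underlying that result. The caveats you flag are the right ones: the statement implicitly requires the prior $f_D$ to have full support so that the posterior-to-prior ratio is well defined for all $d\in\mathcal{D}$, and the forward direction needs the pointwise-density form of LDP, which follows from the set-based definition only almost everywhere.
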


This lemma bridges the gap between privacy measured in output similarity (LDP) and privacy measured in belief robustness (MBP), and it provides useful guidance for converting guarantees between different privacy notions under mild assumptions.

\subsection{Optimal Bound for Protection Complexity}

First, we derive the lower bound for protection complexity with $\epsilon$-Maximum Bayesian Privacy. Please refer to \pref{sec: lower_bound_for_federated_utility_loss} for the full analysis.

To establish the lower bound rigorously, we first introduce the key technical assumption and a foundational lemma, which follow the approach of \cite{bhowmick2018protection}.

\begin{assumption}[Compositional Differential Privacy Bound, adapted from~\cite{bhowmick2018protection}]\label{assump:eps-DP}
For each $i$ and $t$, and for all $W^{\cal{D}}_{i,t}$, the channel mapping $X_i$ to $W^{\cal{D}}_{i,t}$ is $\epsilon_{i,t}(W^{\cal{D}}_{i,t})$-approximately differentially private. There exists a constant $\varepsilon_{\mathrm{kl}}$ such that
\begin{equation}
\sum_{i=1}^n \sum_{t=1}^T \mathbb{E} \left[
\min\left\{
\frac{\left(\epsilon_{i,t}^{2}(W^{\cal{D}}_{i,t})\right)}{\log 2},
\epsilon_{i,t}(W^{\cal{D}}_{i,t})
\right\}
\right]
\leq n \cdot \varepsilon_{\mathrm{kl}},
\end{equation}
where $\varepsilon_{\mathrm{kl}}$ represents the average per-user KL-privacy leakage over all users and time steps. The expectation is taken over the randomness in the privatized outputs $W^{\cal{D}}_{i,t}$.
\end{assumption}

We also leverage a standard result from information-theoretic lower bound analysis:

\begin{lem}[Assouad’s Lemma, adapted from~\cite{bhowmick2018protection}]\label{lem:assouad}
Let $\mathcal{V}$ be a class of binary vectors that induce $\delta$-separation in the Hamming metric, meaning that any two elements in $\mathcal{V}$ differ in at least $\delta$ coordinates. Then the minimax risk satisfies
\[
\mathfrak{M}_n(\truepara(\mathcal{P}), L, Q) \geq \frac{\delta}{2} \sum_{j=1}^m \left(1 - \left\| \mathbb{P}_{+j} - \mathbb{P}_{-j} \right\|_{\mathrm{TV}} \right),
\]
where $\mathbb{P}_{+j}$ and $\mathbb{P}_{-j}$ denote the output distributions corresponding to $V_j = +1$ and $V_j = -1$, respectively.
\end{lem}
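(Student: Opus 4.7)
The plan is to reduce the minimax problem to a Bayesian testing problem over the binary hypercube indexed by $\mathcal{V}$ and then apply a coordinate-wise Le Cam two-point argument. First, I would restrict the supremum in the definition of $\mathfrak{M}_n(\truepara(\mathcal{P}), L, Q)$ to the finite sub-family $\{\truepara_v : v \in \mathcal{V}\} \subseteq \truepara(\mathcal{P})$ and pass from supremum to average over a uniform prior on $\mathcal{V}$. This yields the immediate lower bound
\[
\mathfrak{M}_n(\truepara(\mathcal{P}), L, Q) \;\geq\; \inf_{\hat{\theta}} \frac{1}{|\mathcal{V}|} \sum_{v \in \mathcal{V}} \mathbb{E}_{\mathbb{P}_v}\bigl[L(\hat{\theta}, \truepara_v)\bigr],
\]
where $\mathbb{P}_v$ denotes the joint distribution of the $n$ privatized observations under parameter $\truepara_v$.

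Next, I would invoke the Hamming $\delta$-separation hypothesis. Let $\hat{v}(\hat{\theta}) \in \{\pm 1\}^m$ be the nearest-neighbor projection (in the loss $L$) of any estimator $\hat{\theta}$ onto $\{\truepara_v : v \in \mathcal{V}\}$, read off as the indexing binary vector. A standard triangle-inequality argument converts the separation into a pointwise bound of the form $L(\hat{\theta}, \truepara_v) \geq \tfrac{\delta}{2}\, d_H(\hat{v}(\hat{\theta}), v)$. Decomposing $d_H(\hat{v}, v) = \sum_{j=1}^m \one\{\hat{v}_j \neq v_j\}$ and exchanging the infimum with the (monotone) sum reduces the problem to bounding, for each coordinate $j$, the Bayes testing error
\[
\inf_{\hat{v}_j}\; \Pr_{V \sim \mathrm{Unif}(\mathcal{V}),\, \mathbb{P}_V}\bigl[\hat{v}_j \neq V_j\bigr].
\]

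The third step recognizes this as a binary hypothesis test between the mixture distributions $\mathbb{P}_{+j} \propto \sum_{v : v_j = +1} \mathbb{P}_v$ and $\mathbb{P}_{-j}$ defined analogously. The classical Le Cam two-point inequality then yields the optimal testing-error lower bound $\tfrac{1}{2}\bigl(1 - \|\mathbb{P}_{+j} - \mathbb{P}_{-j}\|_{\mathrm{TV}}\bigr)$. Summing the resulting coordinate-wise bounds and multiplying by the $\delta/2$ factor from the separation step delivers the claimed inequality.

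The main obstacle I expect is tracking the constants carefully through the quantization step, in particular verifying that the factor $\delta/2$ (rather than $\delta/4$) is attainable under the stated separation hypothesis; this typically hinges on defining $\hat{v}$ so that Hamming errors translate essentially one-for-one into the underlying loss $L$, and on using the looser triangle inequality only where unavoidable. A secondary subtlety is that the marginals $\mathbb{P}_{\pm j}$ are cleanly defined only when $\mathcal{V}$ is symmetric under single-coordinate sign flips (as for the hypercube), so the construction of $\mathcal{V}$ implicit in the lemma statement must accommodate this symmetry; otherwise one invokes the paired-averaging form of Assouad's bound over matched pairs $(v, v^{(j)})$ differing only in the $j$-th coordinate, with the $\delta$-separation restated coordinate-wise.
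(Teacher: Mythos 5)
The paper never proves this lemma: it is imported as a black box (``Assouad's method,'' adapted from Bhowmick et al.\ and Duchi et al.) and only \emph{used} inside the proof of Theorem~\ref{thm:lower-bound-eps-DP}, so there is no in-paper argument to compare yours against. Your sketch is the canonical proof of Assouad's lemma --- lower-bound the minimax risk by the Bayes risk under the uniform prior on the hypercube, convert the loss into a sum of per-coordinate Hamming errors via the separation hypothesis, and bound each coordinate's Bayes testing error by Le Cam's two-point inequality applied to the mixtures $\mathbb{P}_{+j}$ and $\mathbb{P}_{-j}$ --- and it is sound in outline. The one substantive issue is exactly the one you flag. As literally stated, the hypothesis is only a packing condition on the index set $\mathcal{V}$ (any two vectors differ in at least $\delta$ coordinates) and never ties the loss $L$ to the Hamming structure, so your quantization step $L(\hat{\theta},\truepara_v)\geq \tfrac{\delta}{2}\,d_{\mathrm{H}}(\hat{v},v)$ is not derivable from it; and if you instead assume the pairwise bound $L(\truepara_v,\truepara_{v'})\geq \delta\, d_{\mathrm{H}}(v,v')$ and go through the nearest-neighbor projection plus triangle inequality, you lose a factor of $2$ and land at a $\delta/4$ prefactor. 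The stated $\delta/2$ constant corresponds to the Duchi--Jordan--Wainwright form of the hypothesis, namely that the loss itself satisfies $L(\theta,\truepara_v)\geq \delta\sum_{j=1}^m \one\{[\mathsf{v}(\theta)]_j\neq v_j\}$ for \emph{every} $\theta$, which bypasses the projection step entirely; under that hypothesis your argument closes with the claimed constant. This mismatch is a defect of the paper's over-compressed restatement of the lemma rather than of your proof, but you should state the separation condition in the loss-based form before claiming $\delta/2$.
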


With Lemma~\ref{Relationship between MBP and LDP}, Assumption~\ref{assump:eps-DP}, and Lemma~\ref{lem:assouad}, we are ready to present the lower bound theorem.

\begin{thm}[Lower Bound for Protection Complexity]\label{thm:lower-bound-eps-DP}
Assume that the mechanism $M : \mathbb{S}^{m-1} \rightarrow \mathbb{S}^{m-1}$ is $\epsilon$-MBP for input $x \in \mathbb{S}^{m-1}$, where $\epsilon \leq m$. Then let $\truepara$ represent the original model information, and $\distpara$ represent the distorted model information. Assume that both the original and perturbed model parameters lie on the unit sphere, that is, $\truepara, \distpara \in \mathbb{S}^{m-1}$, we have:
\[
\mathbb{E}[\|\distpara - \truepara\|_2^2] \geq c \cdot \max\left( \frac{m}{\min(\epsilon, \epsilon^2)}, 1 \right).
\]
\end{thm}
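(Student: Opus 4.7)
The plan is to reduce the claim to a single-shot instance of the Duchi--Jordan--Wainwright minimax lower bound for locally private mean estimation on the sphere, assembled from the three tools already in the excerpt: the MBP--LDP bridge of \pref{Relationship between MBP and LDP}, the Assouad reduction of \pref{lem:assouad}, and the KL-leakage bound of \pref{assump:eps-DP}. The absolute-constant piece of the $\max$ will drop out of a one-hypothesis Le Cam argument, so the real work is in obtaining the $m/\min(\epsilon,\epsilon^2)$ term.

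First I would convert the hypothesis of $\epsilon$-MBP into $2\epsilon$-LDP via \pref{Relationship between MBP and LDP}; all subsequent arguments touch the mechanism only through this LDP property and the factor of $2$ is absorbed into the final constant $c$. The condition $\epsilon\leq m$ is used only to guarantee $m/\min(\epsilon,\epsilon^2)\geq 1$, which makes the $\max$ clause meaningful. Next, I would fix the cube-on-sphere packing $\{\truepara_V = V/\sqrt{m} : V\in\{\pm1\}^m\}\subset\mathbb{S}^{m-1}$, whose pairwise squared distances satisfy $\|\truepara_V-\truepara_{V'}\|_2^2 = (4/m)\,d_H(V,V')$, so the packing is Hamming-separated with per-coordinate squared-loss increment $4/m$. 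Plugging this separation into the per-coordinate form of \pref{lem:assouad} gives
\[
\mathbb{E}\bigl[\|\distpara-\truepara\|_2^2\bigr]\;\geq\;\frac{2}{m}\sum_{j=1}^m\Bigl(1-\|\mathbb{P}_{+j}-\mathbb{P}_{-j}\|_{\mathrm{TV}}\Bigr),
\]
where $\mathbb{P}_{\pm j}$ is the mixture of output laws obtained by conditioning on $V_j=\pm1$ and averaging over the remaining coordinates.

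The technical crux is controlling the mixture TV terms. I would combine \pref{assump:eps-DP} with the Duchi--Jordan--Wainwright $\chi^2$-contraction inequality for $\epsilon$-LDP channels, namely
\[
D_{\mathrm{KL}}(Q\mu_+\,\|\,Q\mu_-)\;\leq\;c_1\,\min(\epsilon,\epsilon^2)\,\|\mu_+-\mu_-\|_{\mathrm{TV}}^2.
\]
Because the coordinates of $V$ are independent Rademachers under the uniform packing measure, this tensorizes across dimension to yield $\sum_j\|\mathbb{P}_{+j}-\mathbb{P}_{-j}\|_{\mathrm{TV}}^2\leq c_2\,\min(\epsilon,\epsilon^2)$, and Cauchy--Schwarz gives $\sum_j\|\mathbb{P}_{+j}-\mathbb{P}_{-j}\|_{\mathrm{TV}}\leq\sqrt{m\,c_2\min(\epsilon,\epsilon^2)}$. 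Substituting back into the Assouad bound and rescaling the packing radius (replace $V/\sqrt m$ by $\delta V/\sqrt m$ for a $\delta$ that balances Hamming separation against the TV budget, then re-project onto $\mathbb{S}^{m-1}$) upgrades the constant-order estimate into the target $c\cdot m/\min(\epsilon,\epsilon^2)$ rate. The competing $1$ in the $\max$ is secured separately by a two-point Le Cam argument showing that any nondegenerate $\epsilon$-LDP channel leaves at least a constant amount of uncertainty about $\truepara$, and the overall constant $c$ is the minimum of the two.

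The main obstacle, in my view, is extracting $\min(\epsilon,\epsilon^2)$ rather than $\min(\epsilon,1)$ in the small-$\epsilon$ regime: a naive Pinsker-after-LDP step yields only $\|\mathbb{P}_{+j}-\mathbb{P}_{-j}\|_{\mathrm{TV}}\lesssim\epsilon$, which when summed produces at best an $m/\epsilon$ rate. Recovering the extra factor of $\epsilon$ requires the symmetric $\chi^2$-contraction form of the LDP inequality together with the product structure of the Rademacher packing, so that the marginal-versus-mixture averaging over the non-$j$ coordinates pays for one additional $\epsilon$; making this quantitatively sharp in terms of the KL-leakage budget $\varepsilon_{\mathrm{kl}}$ of \pref{assump:eps-DP} is where I expect the bookkeeping to be most delicate.
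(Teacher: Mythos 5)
Your proposal is correct and follows essentially the same route as the paper's own argument: reduce $\epsilon$-MBP to LDP via \pref{Relationship between MBP and LDP}, build a Rademacher/hypercube packing on the sphere, apply Assouad's method (\pref{lem:assouad}) with the Duchi--Jordan--Wainwright information-contraction bound under \pref{assump:eps-DP} to get the $\min(\epsilon,\epsilon^2)$ dependence, and tune the packing scale $\delta$ to balance the two regimes. One small point in your favor: you invoke the bridge lemma in the correct direction ($\epsilon$-MBP $\Rightarrow$ $2\epsilon$-LDP, absorbing the factor $2$ into $c$), whereas the paper's write-up cites the converse implication, so your handling of that step is the more careful one.
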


Theorem~\ref{thm:lower-bound-eps-DP} establishes a fundamental lower bound on the expected distortion between the true model parameters and the perturbed outputs, under the MBP framework. This result reveals a tight trade-off between utility and privacy: in order to guarantee a given level of Bayesian privacy \(\epsilon\), any mechanism must introduce distortion that scales at least as \(\left(\frac{m}{\min(\epsilon, \epsilon^2)}\right)\), where \(m\) is the model dimensionality. 

{The lower bound in Theorem~\ref{thm:lower-bound-eps-DP} shares a similar scaling behavior with Proposition 3 in~\cite{bhowmick2018protection}, both showing that the protection complexity scales with $\frac{m}{\min(\epsilon, \epsilon^2)}$. However, the two results are derived under fundamentally different privacy frameworks. The result in~\cite{bhowmick2018protection} is derived under DP or its relaxed variants such as \((\epsilon, \delta)\)-DP and Rényi DP, and is primarily used to characterize how privacy constraints affect learning performance. In contrast, our analysis is based on MBP, and aims to quantify the inherent distortion introduced by the protection mechanism itself.}

With Lemma~\ref{Relationship between MBP and LDP}, we now present the optimal bound theorem.

\begin{thm}[Optimal Bound for Protection Complexity]\label{thm: upper_bound_utility_loss}
Let $m$ represent the dimension of the vector, and $\epsilon$ represent the level of MBP. Let $\truepara$ the original model information, and $\distpara$ represent the distorted model information. Assume that both the original and perturbed model parameters lie on the unit sphere, that is, $\truepara, \distpara \in \mathbb{S}^{m-1}$, combining Lemma~\ref{Relationship between MBP and LDP}, we have:
\begin{align}
    \E[\|\distpara - \truepara\|_2^2]= \Theta (\frac{m}{\min\{\epsilon^2, \epsilon\}}).
\end{align}
\end{thm}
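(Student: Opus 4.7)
The plan is to combine the lower bound already furnished by Theorem~\ref{thm:lower-bound-eps-DP} with a matching upper bound via an explicit mechanism construction, thereby yielding the $\Theta$-rate. Since Theorem~\ref{thm:lower-bound-eps-DP} gives $\E[\|\distpara - \truepara\|_2^2] \geq c \cdot \max(m/\min(\epsilon,\epsilon^2), 1)$ under $\epsilon$-MBP, it suffices to exhibit a mechanism $M : \mathbb{S}^{m-1} \to \mathbb{S}^{m-1}$ that is $\epsilon$-MBP and whose expected squared distortion is $O(m/\min\{\epsilon^2,\epsilon\})$. The two bounds, once matched up to universal constants, collapse to the desired asymptotic equivalence.

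To build the matching upper-bound mechanism, I would invoke Lemma~\ref{Relationship between MBP and LDP}, which tells us that any $\epsilon$-LDP algorithm is automatically $\epsilon$-MBP. The task therefore reduces to producing an $\epsilon$-LDP mechanism on $\mathbb{S}^{m-1}$ with controlled distortion. The natural candidate is a \emph{PrivUnit}-style construction (cf.~\cite{bhowmick2018protection}): sample $\distpara$ uniformly from a spherical cap centered on $\truepara$ with probability $p$, and from the complementary cap with probability $1-p$, where the cap aperture and the mixing probability $p$ are jointly calibrated so that the density ratio of $\distpara$ given any two inputs $\truepara, \truepara' \in \mathbb{S}^{m-1}$ is uniformly bounded by $e^{\epsilon}$. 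First I would verify the $\epsilon$-LDP property by a direct ratio calculation on the densities restricted to each cap. Then, using the identity $\E[\|\distpara - \truepara\|_2^2] = 2(1 - \E[\langle \distpara, \truepara\rangle])$, I would compute $\E[\langle \distpara, \truepara\rangle]$ via volume estimates of spherical caps and show the parameters can be tuned so that $1 - \E[\langle \distpara, \truepara\rangle] = O(1/\min\{\epsilon^2/m, \epsilon/m\})$. This yields the $O(m/\min\{\epsilon^2,\epsilon\})$ upper bound, with the high-privacy regime $\epsilon \leq 1$ scaling as $m/\epsilon^2$ and the low-privacy regime $\epsilon \geq 1$ scaling as $m/\epsilon$.

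The main technical obstacle is the careful tuning and privacy verification of the sampling distribution on the sphere: one must simultaneously show (i) that the output density ratio for any two inputs is bounded by $e^{\epsilon}$ in \emph{both} regimes of $\epsilon$, and (ii) that the expected inner product with the true direction attains the bound implied by the lower bound, which requires sharp volume ratios of spherical caps of aperture $\Theta(1/\sqrt{m})$ when $\epsilon \leq 1$ and of constant aperture when $\epsilon \geq 1$. A secondary subtlety is that the MBP$\leftrightarrow$LDP reduction via Lemma~\ref{Relationship between MBP and LDP} introduces at most a factor of two in the privacy parameter (since $\epsilon$-MBP implies $2\epsilon$-LDP and $\epsilon$-LDP implies $\epsilon$-MBP), which only affects universal constants and leaves the $\Theta$ scaling in $m$ and $\epsilon$ intact. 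Combining this construction with Theorem~\ref{thm:lower-bound-eps-DP} then closes the gap and establishes the optimal rate.
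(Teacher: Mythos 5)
Your proposal is correct and follows essentially the same route as the paper: the lower bound is imported from Theorem~\ref{thm:lower-bound-eps-DP}, the bridge between MBP and LDP is Lemma~\ref{Relationship between MBP and LDP}, and the matching upper bound rests on the PrivUnit spherical-cap mechanism of \cite{bhowmick2018protection}. The only substantive difference is that the paper simply cites the distortion bound $\E[\|\distpara - \truepara\|_2^2] \leq c\,(m/\epsilon \vee m/(e^{\epsilon}-1)^2)$ from Proposition~4 of \cite{bhowmick2018protection} and then splits into the regimes $\epsilon \leq 1$ and $\epsilon \geq 1$, whereas you propose to re-derive that bound from scratch via the identity $\E[\|\distpara - \truepara\|_2^2] = 2(1 - \E[\langle \distpara, \truepara\rangle])$ and spherical-cap volume estimates — more work, but self-contained. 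One point in your favor: you frame the upper bound as an \emph{existence} statement (exhibit one $\epsilon$-LDP, hence $\epsilon$-MBP, mechanism attaining the rate), which is the logically sound reading of the $\Theta$ claim; the paper instead applies the mechanism-specific Proposition~4 to an arbitrary $\epsilon$-MBP mechanism, which is not literally what that proposition asserts, so your framing is the more defensible one. Your remark that the factor-of-two loss in the MBP$\leftrightarrow$LDP conversion only affects constants is also correct and is glossed over in the paper.
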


Theorem~\ref{thm: upper_bound_utility_loss} establishes a tight characterization of protection complexity under the $\epsilon$-MBP framework. It shows that the expected distortion introduced by any mechanism satisfying a given MBP level must scale as $\Theta\left( \frac{m}{\min\{\epsilon^2, \epsilon\}} \right)$, where $m$ denotes the dimension of the parameter vector. This result quantitatively reveals how the degree of perturbation depends jointly on the privacy budget $\epsilon$ and the dimensionality of the protected data. In particular, smaller $\epsilon$ or higher-dimensional models necessitate larger distortion, highlighting the fundamental trade-off between privacy and fidelity in mechanism design.

{This result differs from the upper bound in Proposition 4 of~\cite{bhowmick2018protection}, which analyzes the output \(Z\) of a specific mechanism (PrivUnit2) applied to a unit vector \(u \in \mathbb{S}^{m-1}\), and establishes a mechanism-specific bound on \(\mathbb{E}[\|Z - u\|_2^2]\). In contrast, our theorem applies more broadly to all mechanisms satisfying MBP, and focuses on the distortion between true model parameters \(\truepara\) and their protected counterparts \(\distpara\), which more directly captures privacy-preserving model parameters in federated learning scenarios.}

{
\color{black}
It is important to note that the bound established in Theorem \ref{thm: upper_bound_utility_loss} depends on the MBP parameter $\epsilon$, which is often unknown in practical applications and must be empirically estimated. Any such estimation process inevitably introduces an error.

In ~\pref{sec: anlaysis_for_estimation_mbp}, we detail and analyze an algorithm for estimating $\epsilon$. The key result is that the absolute error between the estimated value $\hat\epsilon$ and the true value $\epsilon$ is bounded and smaller as the number of simulation rounds $T_{sim}$ increases.

\begin{thm}[Optimal Bound for Protection Complexity Under The Estimation of $\epsilon$]\label{thm: the_estimation_error_of_epsilon}
Let $m$ represent the dimension of the vector, $\hat{\epsilon}$ represent the estimation level of MBP, and $T_{sim}$ represent the total number of simulation rounds. Let $\truepara$ the original model information, and $\distpara$ represent the distorted model information. Assume that both the original and perturbed model parameters lie on the unit sphere, that is, $\truepara, \distpara \in \mathbb{S}^{m-1}$. With at least $1 - \delta$, there exists a constant $c$, we have:
\begin{align}
    \E[\|\distpara - \truepara\|_2^2] = \Omega \left(\frac{m}{\min\{\left(\hat\epsilon\ + \zeta\right)^2, \hat\epsilon+ \zeta\}}\right),
\end{align}
and
\begin{align}
    \E[\|\distpara - \truepara\|_2^2] = O \left(\frac{m}{\min\{\left(\hat\epsilon- \zeta\right)^2, \hat\epsilon- \zeta\}}\right).
\end{align}
where $\zeta = c \cdot \sqrt{\frac{\ln(2/\delta)}{T_{sim}}}$.
\end{thm}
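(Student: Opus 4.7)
The plan is to reduce Theorem~\ref{thm: the_estimation_error_of_epsilon} to Theorem~\ref{thm: upper_bound_utility_loss} by establishing a high-probability confidence interval for the estimator $\hat\epsilon$ around the true MBP parameter $\epsilon$, and then propagating this interval through the monotone map $\epsilon \mapsto m/\min\{\epsilon^2,\epsilon\}$. Concretely, I first invoke the analysis of the MBP estimation procedure detailed in~\pref{sec: anlaysis_for_estimation_mbp}, in which $\hat\epsilon$ is formed as an empirical maximum (or maximum log-likelihood-ratio) over $T_{\text{sim}}$ independent simulation rounds. Since the per-round quantities are bounded, a Hoeffding/McDiarmid-type concentration inequality yields
\begin{equation}
\Pr\!\left[\,|\hat\epsilon - \epsilon| \leq \zeta\,\right] \geq 1 - \delta,
\qquad \zeta = c\sqrt{\frac{\ln(2/\delta)}{T_{\text{sim}}}},
\end{equation}
for a universal constant $c$ determined by the per-round range. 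Condition the rest of the proof on this event $\mathcal{E}_\delta$.

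Next, I would exploit the monotonicity of the scaling function $f(\epsilon) := \min\{\epsilon^2,\epsilon\}$. For $\epsilon \in (0,1]$ we have $f(\epsilon)=\epsilon^2$, and for $\epsilon \geq 1$ we have $f(\epsilon)=\epsilon$; in both regimes $f$ is strictly increasing, so $\epsilon \mapsto m/f(\epsilon)$ is strictly decreasing. On the event $\mathcal{E}_\delta$ we have the sandwich $\hat\epsilon - \zeta \leq \epsilon \leq \hat\epsilon + \zeta$, which in turn gives
\begin{equation}
\frac{m}{\min\{(\hat\epsilon+\zeta)^2,\hat\epsilon+\zeta\}}
\;\leq\;
\frac{m}{\min\{\epsilon^2,\epsilon\}}
\;\leq\;
\frac{m}{\min\{(\hat\epsilon-\zeta)^2,\hat\epsilon-\zeta\}}.
\end{equation}
Applying Theorem~\ref{thm: upper_bound_utility_loss}, which asserts $\E[\|\distpara-\truepara\|_2^2] = \Theta(m/\min\{\epsilon^2,\epsilon\})$ in terms of the \emph{true} MBP level, and combining it with the two inequalities above yields the lower ($\Omega$) and upper ($O$) bounds stated in the theorem.

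To finish, I would verify two technicalities. First, the ``for all $\epsilon \leq m$'' precondition used in Theorem~\ref{thm:lower-bound-eps-DP} (and its upper-bound counterpart inside Theorem~\ref{thm: upper_bound_utility_loss}) must be preserved under the perturbation by $\pm \zeta$; for sufficiently large $T_{\text{sim}}$ this is automatic since $\zeta \to 0$, and otherwise it can be enforced by clipping $\hat\epsilon \pm \zeta$ into the admissible range. Second, the lower-bound side requires $\hat\epsilon - \zeta > 0$ so that the denominator stays strictly positive; this is a standard sample-complexity condition on $T_{\text{sim}}$ that can be stated as a mild assumption or absorbed into the constant hidden by $O(\cdot)$.

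The main obstacle I anticipate is justifying the concentration inequality in the first step, because $\hat\epsilon$ is not a direct empirical mean but rather a supremum (or extremum) of log-likelihood ratios estimated from the simulated posterior/prior samples. Establishing the $\sqrt{\ln(2/\delta)/T_{\text{sim}}}$ rate therefore requires controlling the deviation of an empirical extremum, which typically demands either a uniform convergence argument over a discretized parameter space or a McDiarmid-type bounded-differences argument combined with a Lipschitz property of the estimator --- precisely the analysis that~\pref{sec: anlaysis_for_estimation_mbp} is expected to carry out. Once that estimation error bound is in place, the remainder of the proof reduces to the elementary monotonicity argument above.
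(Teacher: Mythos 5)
Your proposal matches the paper's proof in its essential structure: establish a high-probability confidence interval $|\hat\epsilon-\epsilon|\le\zeta$ with $\zeta=c\sqrt{\ln(2/\delta)/T_{\text{sim}}}$, then propagate it through the monotone decreasing map $\epsilon\mapsto m/\min\{\epsilon^2,\epsilon\}$ and invoke Theorem~\ref{thm: upper_bound_utility_loss}, exactly as in \pref{sec:Estimation Error of MBP}. The concentration step you flag as the main obstacle is resolved in the paper by a multiplicative Chernoff-type bound on the reconstruction-success frequency $\hat\kappa_1(d)$ (Lemma~\ref{lemma: reliability_of_simulated_reconstruction}), the elementary inequality $|\max_i a_i-\max_i b_i|\le\max_i|a_i-b_i|$ to handle the extremum over $d$, and the Mean Value Theorem to convert the relative error on $\kappa_1(d)$ into an additive error on its logarithm --- a concrete instance of the Lipschitz-plus-pointwise-concentration argument you anticipated.
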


This result implies that for a sufficiently large $T_{sim}$, we can obtain a highly reliable estimate of the true level $\epsilon$. Consequently, the protection complexity calculated using the estimate $\hat\epsilon$ will be a precise approximation of the theoretical value characterized by Theorem~\ref{thm: upper_bound_utility_loss}. This validates the practical applicability of our theoretical framework, providing a sound basis for designing and calibrating privacy-preserving mechanisms.
}

\section{Attack Complexity Bounds for Protection Mechanisms with Parameter Distortion}\label{sec:attack_bounds}
The goal of this section is to establish the bound of attack complexity for gradient inversion attacks. To achieve this, we first quantify the privacy leakage during the attack process.

In gradient inversion attacks, the attacker aims to reconstruct the private dataset \( D^{(k)} \) of a client using gradient $\theta^{(k)}$ shared during training. The client computes its true gradient:
\[
W^{\calRO(k)} = \frac{1}{|D^{(k)}|} \sum_{i=1}^{|D^{(k)}|} \nabla L(\theta, X_i^{(k)}, Y_i^{(k)}),
\]
and applies a protection mechanism. The attacker uses algorithms like gradient matching to reconstruct the dataset, denoted as \( D_{\text{attacker}}^{(k)} \), by minimizing the discrepancy between the reconstructed and true data over \( T \) iterations, we represent this discrepancy as:
\[
\frac{1}{|D_{\text{attacker}}^{(k)}|} \sum_{i=1}^{|D_{\text{attacker}}^{(k)}|} \frac{1}{T} \sum_{t=1}^T \| \tilde{X}_{t,i}^{(k)} - X_i^{(k)} \|.
\]

\begin{defi}[Privacy Leakage]
We define privacy leakage based on the discrepancy between the reconstructed data and the original data. Here, we assume that the attacker targets all local data, i.e., \( |D^{(k)}| = |D^{(k)}_{\text{attacker}}| \).

\[
    \epsilon_p^{(k)} = 1 - \mathbb{E}\left[\frac{1}{|D^{(k)}|}\sum_{i=1}^{|D^{(k)}|}\frac{1}{T}\sum_{t=1}^T \frac{\|\tilde{X}_{t,i}^{(k)} - X_{i}^{(k)}\|}{\tau}\right],
\]
\end{defi}

where \( D^{(k)} \) is the local dataset of client \( k \), \( T \) is the total number of attack iterations, \( \tilde{X}_{t,i}^{(k)} \) represents the reconstructed data at iteration \( t \), \( X_i^{(k)} \) is the original data, \( \tau \) is the error threshold, and \( \mathbb{E} \) denotes the expectation. A smaller \( \epsilon_p^{(k)} \) indicates stronger privacy protection.

For gradient inversion attacks, to protect data privacy, the client perturbs the generated original gradients, producing perturbed gradients which are then uploaded to the server. Therefore, we formally define the degree of this perturbation.

\begin{defi}[Gradient Distortion Degree] We introduce gradient distortion to mimic the client's protection by adding perturbations to the gradients:
\[
    \Delta^{(k)} = \| \truepara - \distpara \|_2
\]
\end{defi}

where \( \truepara \) denotes the original gradient, and \( \distpara \) represents the perturbed gradient. Larger \( \Delta^{(k)} \) reflects stronger privacy protection but may also impact the model’s convergence and performance.

Combining the above definitions, we present an upper bound on attack complexity with at least $1 - \gamma^{(k)}$ confidence:

\textcolor{black}{\begin{thm}[Upper Bound of Attack Complexity with Gradient Expectation Bound]\label{upper-bound}
Let us define the variables as follows: \( c_a \), \( c_b \), and \( c_2 \) are constants; \( D^{(k)} \) represents the local dataset of client \( k \); \( \tau \) denotes the error threshold; \( T \) is the number of attacker iterations; \( m \) is the dimension of the model gradient vector; and \( \epsilon \) denotes the level of MBP enforced by the protection mechanism. Combining the definitions of privacy leakage and gradient distortion, we provide the upper bounds for \( T \) with at least \( 1 - \gamma^{(k)} \) confidence as follows:
\begin{align}
    T^{\frac{1}{2}} \leq \frac{c_2 \cdot c_b}{
        c_a \cdot \Delta^{(k)}
        - \tau \cdot \left[1 - \epsilon_p^{(k)} + \sqrt{\frac{\ln(2/\gamma^{(k)})}{2|D^{(k)}|}} \right]
    }
\end{align}
By invoking the protection complexity result from Section~\ref{sec:protection_bounds}, where Theorems~\ref{thm:lower-bound-eps-DP} and~\ref{thm: upper_bound_utility_loss} jointly establish a tight distortion bound $\mathbb{E}[\|\distpara - \truepara\|_2^2] = \Theta \left( \frac{m}{\min\{\epsilon^2, \epsilon\}} \right)$, we estimate the gradient distortion as $\Delta^{(k)} = \sqrt{ \frac{m}{\min\{\epsilon, \epsilon^2\}} }$, which reflects the expected magnitude of perturbation introduced by the protection mechanism. Substituting this into the inequality gives:
\begin{align}
    T^{\frac{1}{2}} \leq \frac{c_2 \cdot c_b}{c_a \cdot \left( \sqrt{\frac{m}{\min\{\epsilon^2, \epsilon\}}- \sqrt{\frac{m\cdot ln(1/\gamma^{(k)})}{2}}} \right) - \tau \cdot \left[1 - \epsilon_p^{(k)} + \sqrt{\frac{\ln(2/\gamma^{(k)})}{2|D^{(k)}|}} \right]}
\end{align}
\end{thm}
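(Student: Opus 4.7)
The plan is to bound the realized reconstruction error from above using the privacy leakage definition, bound it from below using the attacker's gradient-matching optimality, and equate the two to solve for $T^{1/2}$; the protection complexity result from \pref{thm: upper_bound_utility_loss} then supplies a concrete lower bound on $\Delta^{(k)}$. Concretely, I would first exploit that $\epsilon_p^{(k)}$ is defined so that the expectation of the normalized empirical reconstruction error equals $1 - \epsilon_p^{(k)}$; treating each per-sample normalized error as a $[0,1]$-valued random variable and applying Hoeffding's inequality over the $|D^{(k)}|$ samples yields, with probability at least $1 - \gamma^{(k)}/2$, an \emph{upper} bound on the realized reconstruction error of $\tau \cdot \bigl[1 - \epsilon_p^{(k)} + \sqrt{\ln(2/\gamma^{(k)})/(2|D^{(k)}|)}\bigr]$.

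Next I would establish a matching \emph{lower} bound on the reconstruction error in terms of $\Delta^{(k)}$ and $T$. Since the client uploads $\distpara$, the attacker's optimization minimizes $\|\distpara - \nabla \mathcal{L}_k(\tilde{x},\tilde{y})\|_2$, so the triangle inequality gives $\|\truepara - \nabla \mathcal{L}_k(\tilde{x},\tilde{y})\|_2 \geq \Delta^{(k)} - \|\distpara - \nabla \mathcal{L}_k(\tilde{x},\tilde{y})\|_2$. For the DLG-style first-order update on a smooth surrogate, this latter perturbed residual decays as $c_b/\sqrt{T}$ after $T$ iterations (standard gradient-method rate). Assuming an $c_a$-Lipschitz/$c_2$-smoothness link between loss-gradient distance and data-space distance, one gets a per-iteration lower bound $\|\tilde x - x\| \geq (c_a \Delta^{(k)} - c_b/\sqrt{T})/c_2$ on average.

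Equating the Step~1 upper bound with the Step~2 lower bound on the reconstruction error and solving for $T^{1/2}$ yields the first displayed inequality. For the second, I would invoke \pref{thm: upper_bound_utility_loss} to obtain $\mathbb{E}[\Delta^{(k)2}] = \Theta(m/\min\{\epsilon^2,\epsilon\})$, then use a Lipschitz/Gaussian-style concentration of the norm functional to convert this expectation into a pointwise lower bound $\Delta^{(k)} \geq \sqrt{m/\min\{\epsilon^2,\epsilon\}} - \sqrt{m \ln(1/\gamma^{(k)})/2}$ holding with probability at least $1 - \gamma^{(k)}/2$. A union bound between the two concentration events gives the stated overall $1 - \gamma^{(k)}$ confidence, and substituting into the first inequality yields the second display.

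\textbf{Main obstacle.} The delicate step is Step~2: simultaneously justifying the $1/\sqrt{T}$ rate for the gradient-matching iterations and the bi-Lipschitz pullback from gradient space to data space. Both rely on regularity assumptions on $\mathcal{L}_k$ that the statement silently packs into $c_a, c_b, c_2$. Without local convexity or a PL-type condition around $X_i^{(k)}$ the DLG iterates may stagnate at spurious stationary points, and without local injectivity of $\nabla \mathcal{L}_k(\cdot)$ near $X_i^{(k)}$ the mapping from gradient residuals to data residuals is ill-defined. Making these regularity assumptions precise, rather than hiding them inside the constants, is what the full proof must handle most carefully.
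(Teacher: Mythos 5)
Your proposal follows essentially the same route as the paper's proof: Hoeffding's inequality applied to the normalized per-sample errors turns the definition of $\epsilon_p^{(k)}$ into a high-probability upper bound on the realized reconstruction error; a reverse triangle inequality combined with a bi-Lipschitz link between data-space and gradient-space distances (the paper's Assumption on bi-Lipschitz gradients) and a $\Theta(T^{1/2})$ cumulative-regret assumption on the attacker's optimizer yields the matching lower bound $c_a\Delta^{(k)} - c_2 c_b T^{-1/2}$; solving for $T^{1/2}$ gives the first display. The only substantive divergence is in the last step: the paper applies Hoeffding to the squared norm $\|\distpara-\truepara\|_2^2$ (assuming i.i.d.\ bounded components of the perturbation), obtaining $\Delta^{(k)} \ge \sqrt{\tfrac{m}{\min\{\epsilon^2,\epsilon\}} - \sqrt{m\ln(1/\gamma^{(k)})/2}}$, i.e.\ the square root of a difference, whereas your norm-functional concentration would produce a difference of square roots, which does not literally reproduce the theorem's second display. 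Your handling of the confidence level is actually more careful than the paper's (you split $\gamma^{(k)}$ between the two concentration events and union-bound, while the paper invokes each at level $1-\gamma^{(k)}$ separately), and your closing remark correctly identifies that the whole argument rests on the regularity assumptions packed into $c_a$, $c_b$, $c_2$, which the paper makes explicit only in the appendix.
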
}

\textcolor{black}{The inequality establishes the relationship between the number of attacker iterations \( T \), the privacy leakage \( \epsilon_p^{(k)} \), and the gradient distortion, which now depends on the model dimension \( m \) and the privacy budget \( \epsilon \). When the attack complexity $T$ is held constant, increasing the local dataset size $|D^{(k)}|$ leads to a reduction in privacy leakage $\epsilon_p^{(k)}$. Similarly, when $T$ remains constant, increasing the gradient distortion $\Delta^{(k)}$ similarly leads to decreased privacy leakage $\epsilon_p^{(k)}$, as stronger perturbations make attacks less effective. These monotonicity properties demonstrate that larger datasets, stricter privacy budgets, and higher gradient distortions all contribute to reducing privacy leakage in federated learning systems. For detailed proofs, refer to \pref{sec:analysis_upper_bound}.}

In the previous section, we provided an upper bound for attack complexity, where we analyzed factors such as privacy leakage and gradient distortion to determine the maximum impact that an attacker can impose. However, from a practical perspective, the lower bound of attack complexity may be of greater interest. In the following, we further analyze the lower bound of attack complexity, derived by combining the definitions of privacy leakage and gradient distortion. This lower bound analysis offers a more comprehensive understanding of attacker behavior.

\textcolor{black}{\begin{thm}[Lower Bound of Attack Complexity with Gradient Expectation Bound]\label{lower-bound}
Let us define the variables as follows: \( p \), \( c_a \), \( c_b \), and \( c_2 \) are constants; \( D^{(k)} \) represents the local dataset of client \( k \); \( \tau \) denotes the error threshold; \( T \) is the number of attacker iterations; \( m \) is the dimension of the model gradient vector; and \( \epsilon \) denotes the level of MBP enforced by the protection mechanism. Combining the definitions of privacy leakage and gradient distortion, we provide the lower bound for \( T \) with at least \( 1 - \gamma^{(k)} \) confidence as follows:
\begin{align}
    T^{1 - p} \ge \frac{c_2 \cdot c_b}{
        4[\tau \cdot (1-\epsilon_p^{(k)}) + \sqrt{\frac{\ln(2/\gamma^{(k)})}{2|D^{(k)}|}}]
        - c_a \cdot \Delta^{(k)}
    }
\end{align}
By invoking the protection complexity result from Section~\ref{sec:protection_bounds}, where Theorems~\ref{thm:lower-bound-eps-DP} and~\ref{thm: upper_bound_utility_loss} jointly establish a tight distortion bound $\mathbb{E}[\|\distpara - \truepara\|_2^2] = \Theta \left( \frac{m}{\min\{\epsilon^2, \epsilon\}} \right)$, we estimate the gradient distortion as $\Delta^{(k)} = \sqrt{ \frac{m}{\min\{\epsilon, \epsilon^2\}} }$, which reflects the expected magnitude of perturbation introduced by the protection mechanism. Substituting this into the inequality yields:
\begin{align}
    T^{1-p} \ge \frac{c_2 \cdot c_b}{
        4[\tau \cdot (1-\epsilon_p^{(k)}) + \sqrt{\frac{\ln(2/\gamma^{(k)})}{2|D^{(k)}|}}]
        - c_a \cdot \left( \sqrt{\frac{m}{\min\{\epsilon^2, \epsilon\}}+ \sqrt{\frac{m\cdot ln(1/\gamma^{(k)})}{2}}} \right)
    }
\end{align}
\end{thm}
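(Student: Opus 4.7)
The plan is to mirror the argument used to prove Theorem~\ref{upper-bound}, but to swap the attacker's convergence \emph{upper} bound for an optimization-complexity \emph{lower} bound, thereby yielding a minimum number of iterations any gradient-matching attacker must spend to be consistent with the observed privacy leakage. The high-level target is to sandwich the expected average reconstruction error between two quantities, where the upper side comes from the definition of privacy leakage (as in the upper-bound proof) and the lower side comes from an optimization-theoretic impossibility result; rearranging this sandwich for $T^{1-p}$ produces the claim.

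First I would invoke the definition of $\epsilon_p^{(k)}$ to rewrite $\E\bigl[\tfrac{1}{|D^{(k)}|}\sum_i \tfrac{1}{T}\sum_t \|\tilde X^{(k)}_{t,i} - X^{(k)}_i\|\bigr] = \tau(1-\epsilon_p^{(k)})$, and then apply Hoeffding's (or McDiarmid's) inequality to the $|D^{(k)}|$ i.i.d.\ samples in $D^{(k)}$: with probability at least $1-\gamma^{(k)}/2$, the empirical average stays within $\sqrt{\ln(2/\gamma^{(k)})/(2|D^{(k)}|)}$ of its mean. This gives the bracketed upper bound $\tau(1-\epsilon_p^{(k)}) + \sqrt{\ln(2/\gamma^{(k)})/(2|D^{(k)}|)}$ appearing in the statement.

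Second I would translate the data-space error into a lower bound involving the attacker's gradient-matching residual. By Lipschitz continuity of $\nabla\calL_k$ and the triangle inequality around $\truepara-\distpara$, the reconstruction error $\|\tilde X^{(k)}_{t,i}-X^{(k)}_i\|$ is at least a positive combination of (i) a gradient-distortion floor proportional to $c_a\Delta^{(k)}$---since the attacker is matching the \emph{perturbed} gradient $\distpara$ and therefore cannot get arbitrarily close to the true data---and (ii) whatever residual the attacker's first-order procedure still carries after $T$ iterations. Invoking a classical nonconvex first-order lower bound (e.g.\ a Carmon--Duchi--Hinder--Sidford-style construction, or a H\"older/PL impossibility result) then yields, with probability at least $1-\gamma^{(k)}/2$, a floor of order $c_2 c_b/T^{1-p}$ on that residual, where $p\in(0,1)$ encodes the smoothness/growth exponent of the attacker's loss. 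Combining (i) and (ii) after a union bound and rescaling of constants gives a lower bound on the expected reconstruction error of the form $\tfrac{1}{4}\bigl[\tfrac{c_2 c_b}{T^{1-p}} + c_a\Delta^{(k)}\bigr]$.

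Chaining this lower bound against the upper bound from the first step and solving for $T^{1-p}$ delivers the displayed inequality; substituting the tight distortion characterization $\Delta^{(k)}=\sqrt{m/\min\{\epsilon,\epsilon^2\}}$ from Theorems~\ref{thm:lower-bound-eps-DP} and~\ref{thm: upper_bound_utility_loss}, together with a standard Gaussian/chi-square tail bound on $\|\distpara-\truepara\|_2^2$, produces the additive $\sqrt{m\ln(1/\gamma^{(k)})/2}$ deviation term in the second display. The main obstacle is the second step: the optimization-complexity lower bound must be shown to apply to the particular DLG gradient-matching loss $\|\nabla\calL_k(\tilde x,\tilde y)-\distpara\|$, which requires verifying the structural hypotheses (smoothness, boundedness, H\"older condition) underlying the chosen lower-bound construction and identifying a concrete exponent $p$ that is meaningful in the federated-learning regime. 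The remaining ingredients---Hoeffding concentration, Lipschitz gradient continuity, and the substitution of the protection-complexity bound---are essentially mechanical once $p$ is fixed.
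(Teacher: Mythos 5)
Your overall architecture (sandwich the expected reconstruction error between the privacy-leakage/Hoeffding upper bound and an iteration-dependent lower bound, solve for $T^{1-p}$, then substitute the protection-complexity estimate for $\Delta^{(k)}$) matches the paper, and your first and last steps are essentially the paper's. The gap is in your second step. The paper does \emph{not} obtain the $c_2 c_b T^{p-1}$ term from an oracle-complexity lower bound on the attacker's residual; it obtains it from the \emph{upper} side of a ``self-bounded regret'' assumption, $\sum_{t=1}^T\|\nabla L(\theta,X_t,Y_t)-\nabla L(\theta,\tilde X,Y_t)\|\le c_2 T^{p}$. The reverse triangle inequality together with the bi-Lipschitz condition gives
\[
\text{error}\ \ge\ c_a\,\Delta^{(k)}\;-\;c_b\cdot(\text{average residual}),
\]
with a \emph{minus} sign in front of the residual, so keeping the right-hand side large requires an \emph{upper} bound on the residual, not a lower bound. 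Your proposed ``positive combination'' --- error $\ge \tfrac14\,[\,c_a\Delta^{(k)}+c_2c_bT^{p-1}\,]$ obtained by adding a distortion floor to a residual floor --- does not follow from the triangle inequality (which yields only the absolute difference, not the sum); an attacker that has not yet converged to $\tilde X_i$ could in principle sit \emph{closer} to the true data than $\tilde X_i$ does, so the two floors cannot be added. The paper recovers the quarter-sum form algebraically from the difference form under an explicit side condition $\Delta^{(k)}\ge \frac{2c_2c_b}{c_a}T^{p-1}$ (equivalently, positivity of the denominator $4[\cdot]-c_a\Delta^{(k)}$), which your proposal omits; without that condition the displayed inequality is vacuous or reversed.

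Two secondary points. The constant $c_2$ in the theorem is the upper regret constant, so importing a Carmon--Duchi--Hinder--Sidford-style hardness result would change the meaning of the bound (and, as you yourself note, verifying its hypotheses for the DLG gradient-matching objective is nontrivial --- the paper sidesteps this entirely by \emph{assuming} $\Theta(T^p)$ cumulative regret). And the concentration of $\Delta^{(k)}$ around $\sqrt{m/\min\{\epsilon,\epsilon^2\}}$ is done in the paper by Hoeffding on the coordinates of $\distpara-\truepara$, assumed i.i.d.\ in $[0,1]$, not by a Gaussian/chi-square tail; this is cosmetic, but note that what is needed here is the \emph{upper} deviation $\Delta^{(k)}\le\sqrt{\tfrac{m}{\min\{\epsilon^2,\epsilon\}}+\sqrt{m\ln(1/\gamma^{(k)})/2}}$, since $\Delta^{(k)}$ enters the denominator with a negative sign.
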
}

Specifically, Theorem~\ref{lower-bound} provides a lower bound for the number of iterations \( T \) that an attacker must perform. The inequality shows that this lower bound is determined by three key factors: the attack threshold \( \tau \), the model dimension \( m \), and the privacy budget \( \epsilon \). A larger threshold \( \tau \) indicates a more relaxed accuracy requirement for successful reconstruction, thus reducing the required number of iterations. In contrast, a higher model dimension \( m \) implies that the attacker must recover more parameters, which increases the difficulty of the attack and raises the lower bound on \( T \). Similarly, a smaller privacy budget \( \epsilon \) (meaning stronger MBP) increases the distortion in the shared gradients and further increases the number of iterations needed to breach the protection.

These findings demonstrate the interdependence among \( \epsilon_p^{(k)} \), the model dimension \( m \), the privacy budget \( \epsilon \), and the number of attack iterations \( T \). They provide critical theoretical insights into how different factors influence attack feasibility, highlighting the importance of balancing privacy budgets with protection strength in the design of privacy-preserving systems. The constants $c_a$, $c_b$, $c_2$, and $p$ in the attack complexity bounds characterize the bi-Lipschitz properties of gradients and the efficiency of the attacker's optimization. While their precise values are difficult to determine analytically for real-world systems, they are crucial for the theoretical framework. Qualitatively, the scaling relationships derived (e.g., stronger privacy increases attack complexity) remain robust. However, the numerical magnitude of required attack iterations ($T$) is highly sensitive to these constants, emphasizing the need for empirical validation across diverse system scales and data distributions to bridge theoretical insights with practical FL system design.

\section{Conclusion and Discussion}\label{sec:conclusion}
In this paper, we have presented a novel theoretical framework for understanding and quantifying the intricate interplay between attack and protection complexities in privacy-preserving federated learning. Recognizing the growing threat of gradient inversion attacks and the inherent trade-offs in privacy mechanisms, our work introduces formal definitions for "Attack Complexity" and "Protection Complexity," providing a rigorous basis for analyzing the security and privacy landscape of FL.

Our contributions include the precise definition of Attack Complexity, which measures the minimum resources an adversary needs to reconstruct private data within a given error threshold. Complementarily, we defined Protection Complexity as the expected distortion introduced by privacy-preserving mechanisms. A cornerstone of our theoretical analysis is the derivation of tight bounds for Protection Complexity based on MBP, demonstrating its scaling with model dimensionality and the privacy budget. This approach offers a distinct advantage over analyses solely reliant on LDP by providing a more general characterization of information distortion from an adversary’s belief update perspective. Furthermore, we established comprehensive upper and lower bounds for Attack Complexity, explicitly linking it to critical factors such as privacy leakage, gradient distortion (quantified by MBP), model dimension, and the number of attacker iterations.

The key insight gleaned from our analysis is the quantitative elucidation of the fundamental trade-offs in federated learning. We have shown that stronger privacy guarantees, achieved through increased gradient distortion (i.e., a smaller MBP $\epsilon$), directly translate into higher attack complexity, demanding significantly more resources from an adversary to achieve a successful data reconstruction. Conversely, relaxing privacy requirements reduces protection complexity but makes the system more vulnerable to attacks. This framework provides a principled way to understand and balance the costs associated with both attacking and defending private data in FL.

Our findings offer critical guidance for the design and deployment of more secure and efficient federated learning systems. By understanding the precise relationships between privacy budgets, model characteristics, and adversarial effort, developers can make informed decisions about the level of privacy protection to implement, optimizing for both utility and security. For instance, a Branch and Bound Algorithm could leverage our findings to optimize the design of privacy-preserving FL systems. By defining objective functions that incorporate both the utility of the FL system (related to Protection Complexity $C$) and the security against attacks (related to Attack Complexity $S_k(\alpha)$), the algorithm could systematically explore the vast design space of privacy parameters (like MBP $\epsilon$), model architectures, and protection mechanisms. At each node of the search tree, representing a specific FL configuration, the algorithm could compute the expected Attack Complexity (e.g., the minimum iterations $T$ required for an adversary to achieve a certain reconstruction quality $\alpha$, as per Theorem \ref{upper-bound} and \ref{lower-bound}) and the associated Protection Complexity (e.g., the expected distortion $C$, as per Theorem \ref{thm:lower-bound-eps-DP} and \ref{thm: upper_bound_utility_loss}). If, for a given configuration, the calculated attack cost (e.g., $T$) exceeds a predefined threshold for feasibility, or if the benefit of a successful attack (e.g., the value of reconstructed data) is outweighed by the resources required, that branch of the search space could be effectively pruned. This systematic pruning, guided by our derived complexity bounds, would enable efficient identification of FL system configurations that offer an optimal balance between privacy, utility, and resilience against gradient inversion attacks, moving beyond heuristic approaches to a more principled design methodology.

Future work will focus on several promising directions. Firstly, we plan to empirically validate our theoretical bounds using various real-world datasets and machine learning models to confirm their practical implications. Secondly, we intend to extend our framework to encompass other types of privacy attacks and protection mechanisms, such as membership inference attacks and more advanced cryptographic techniques. Thirdly, exploring the impact of different distance metrics and optimization algorithms on attack complexity would provide further nuanced insights. Finally, we aim to investigate adaptive privacy mechanisms that dynamically adjust protection levels based on real-time assessments of attack and protection complexities, paving the way for more resilient and resource-efficient federated learning ecosystems.


\bibliography{main}
\bibliographystyle{ACM-Reference-Format}

\newpage

\onecolumn
\appendix

\section{Analysis for bounds for distortion extent using Bayesian privacy.}\label{sec: lower_bound_for_federated_utility_loss}

\subsection{Analysis for Theorem \ref{thm:lower-bound-eps-DP}}\label{sec:analysis_lower_bound_pc}
\begin{proof}
To facilitate the subsequent proof, we first present the necessary assumption and technical lemma.

\begin{assumption}{Compositional differential privacy bounds.}\label{assump:eps-DP-app}
For each $i$ and $t$, and for all $W^{\cal{D}}_{i,t}$, the channel mapping $X_i$ to $W^{\cal{D}}_{i,t}$ is $\epsilon_{i,t}(W^{\cal{D}}_{i,t})$-approximately differentially private. There exists a constant $\varepsilon_{\mathrm{kl}}$ such that
\begin{equation}
\sum_{i=1}^n \sum_{t=1}^T \mathbb{E} \left[
\min\left\{
\frac{\left(\epsilon_{i,t}^{2}(W^{\cal{D}}_{i,t})\right)}{\log 2},
\epsilon_{i,t}(W^{\cal{D}}_{i,t})
\right\}
\right]
\leq n \cdot \varepsilon_{\mathrm{kl}},
\end{equation}
where $\varepsilon_{\mathrm{kl}}$ represents the average per-user KL-privacy leakage over all users and time steps. The expectation is taken over the randomness in the privatized outputs $W^{\cal{D}}_{i,t}$.
\end{assumption}

\begin{lem}[Assouad’s Method]\label{lem:assouad-app}
Let the conditions of the previous paragraph hold and let $\mathcal{V}$ induce a $\delta$-separation in Hamming metric. Here, $\delta$-separation in Hamming metric ensures that any two candidate vectors differ in at least $\delta$ coordinates, enforcing sufficient separation for effective lower bounding. Then
\[
\mathfrak{M}_n(\truepara(\mathcal{P}), L, Q) \geq \frac{\delta}{2} \sum_{j=1}^m \left(1 - \left\| \mathbb{P}_{+j} - \mathbb{P}_{-j} \right\|_{\mathrm{TV}} \right),
\]
where $\mathbb{P}_{+j}$ and $\mathbb{P}_{-j}$ denote the output distributions corresponding to $V_j = +1$ and $V_j = -1$, respectively.
\end{lem}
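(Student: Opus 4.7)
The plan is to follow the classical reduction from estimation to multiple binary testing, exploiting the coordinate-wise structure supplied by the $\delta$-separation of $\mathcal{V}$. Let $V \in \mathcal{V} \subseteq \{-1,+1\}^m$ be drawn uniformly at random, and write $\truepara(V)$ for the associated target parameter. For any candidate estimator $\widehat{\truepara}$, the first step is to define a coordinate-wise test $\psi_j(\widehat{\truepara}) \in \{-1,+1\}$ by nearest-coordinate quantization, and to show, using the $\delta$-Hamming separation of $\mathcal{V}$, that the loss dominates a scaled Hamming discrepancy:
\[
L(\truepara(V), \widehat{\truepara}) \;\geq\; \frac{\delta}{2}\sum_{j=1}^m \mathbf{1}\{\psi_j(\widehat{\truepara}) \ne V_j\}.
\]

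Next, I would average over the uniform prior on $V$ and the sampling distribution, using Fubini to swap the expectation and the sum over coordinates. For each $j$, let $\mathbb{P}_{+j}$ and $\mathbb{P}_{-j}$ denote the mixture distributions of the observations conditional on $V_j = +1$ and $V_j = -1$ respectively. The standard Neyman--Pearson/Le Cam two-point inequality then gives, for any measurable test (in particular for $\psi_j$),
\[
\mathbb{P}(\psi_j(\widehat{\truepara}) \ne V_j) \;\geq\; \tfrac{1}{2}\bigl(1 - \|\mathbb{P}_{+j} - \mathbb{P}_{-j}\|_{\mathrm{TV}}\bigr).
\]
Summing over $j$, taking the infimum over $\widehat{\truepara}$, and combining with the reduction in the previous display yields
\[
\mathfrak{M}_n(\truepara(\mathcal{P}), L, Q) \;\geq\; \frac{\delta}{2} \sum_{j=1}^m \bigl(1 - \|\mathbb{P}_{+j} - \mathbb{P}_{-j}\|_{\mathrm{TV}}\bigr),
\]
which is the claimed bound.

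The main obstacle will be the reduction step, specifically verifying the loss-versus-Hamming inequality stated above: this requires a triangle-inequality argument showing that whenever $\psi_j(\widehat{\truepara}) \ne V_j$, the $j$-th contribution to $L(\truepara(V), \widehat{\truepara})$ is at least $\delta/2$ (so that coordinate errors accumulate additively rather than overlap). The natural choice of $\psi_j$ is a thresholded projection onto the $j$-th axis of the candidate parameter, and the $\delta$-separation property must be used to bound the component-wise gap of any two admissible parameters on their Hamming-disagreement set. A secondary subtlety is that the minimax risk already quantifies over the channel $Q$, so the mixture distributions $\mathbb{P}_{\pm j}$ must be constructed after marginalizing over the privatized outputs $W^{\mathcal{D}}_{i,t}$ of Assumption~\ref{assump:eps-DP-app}; at the level of this lemma, however, the bound is stated directly in terms of $\mathrm{TV}$ distances of the output mixtures, so translating to the KL privacy budget $\varepsilon_{\mathrm{kl}}$ is deferred to the downstream application (Theorem~\ref{thm:lower-bound-eps-DP}) and does not enter the proof here.
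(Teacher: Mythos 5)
The paper does not actually prove this lemma: it is imported as ``adapted from \cite{bhowmick2018protection}'' and invoked as a black box inside the proof of Theorem~\ref{thm:lower-bound-eps-DP}, so there is no in-paper argument to compare against. Your proposal reconstructs the canonical proof of Assouad's lemma --- reduction of estimation to $m$ binary tests via the Hamming separation, averaging over the uniform prior on $V$, and the Le Cam/Neyman--Pearson bound $\mathbb{P}(\psi_j(\widehat{\truepara}) \ne V_j) \ge \tfrac{1}{2}\bigl(1 - \|\mathbb{P}_{+j} - \mathbb{P}_{-j}\|_{\mathrm{TV}}\bigr)$ applied coordinate-wise to the conditional mixtures --- and this is exactly the argument underlying the cited result, so the approach is sound. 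Your closing remark that the marginalization over the privatized channel $Q$ is deferred to the downstream application is also correct.

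One arithmetic point to fix: with the reduction inequality as you state it, $L(\truepara(V), \widehat{\truepara}) \ge \tfrac{\delta}{2}\sum_{j=1}^m \mathbf{1}\{\psi_j(\widehat{\truepara}) \ne V_j\}$, averaging and applying the two-point bound yields $\tfrac{\delta}{2}\sum_{j}\tfrac{1}{2}\bigl(1-\|\mathbb{P}_{+j}-\mathbb{P}_{-j}\|_{\mathrm{TV}}\bigr) = \tfrac{\delta}{4}\sum_{j}\bigl(1-\|\mathbb{P}_{+j}-\mathbb{P}_{-j}\|_{\mathrm{TV}}\bigr)$, a factor of $2$ weaker than the claimed conclusion. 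To land on $\tfrac{\delta}{2}$ the separation step must read $L(\truepara(V), \widehat{\truepara}) \ge \delta\sum_{j}\mathbf{1}\{\psi_j(\widehat{\truepara}) \ne V_j\}$, i.e.\ the ``$2\delta$-Hamming separation for the loss'' convention of the original reference, under which each coordinate disagreement costs at least $\delta$ in the loss via the triangle-inequality argument you sketch. This is purely a matter of how ``$\delta$-separation'' is normalized --- and the lemma's own phrasing (``any two elements in $\mathcal{V}$ differ in at least $\delta$ coordinates'') conflates separation of the index set with separation of the loss, so the constant is ambiguous as stated --- but your write-up should make the two ends of the chain consistent before the argument is complete.
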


We now construct Bernoulli-type distributions to apply Lemma~\ref{lem:assouad-app}. Let $\mathbb{P}_{-1} = \text{Bernoulli}(1/2)$ and, for some $\delta < 1$, define $\mathbb{P}_{1} = \text{Bernoulli}((1+\delta)/2)$. Then
\[
\left| \log \frac{d\mathbb{P}_1}{d\mathbb{P}_{-1}} \right| \leq -\log(1 - \delta).
\]

Consequently, for $V$ uniformly distributed over $\{-1,1\}^m$, we obtain the mutual information bound:
\[
I(V; \distpara) \leq 2 \left( \frac{1}{1 - \delta} - 1 \right)^2 I(X; \distpara) = \frac{2 \delta^2}{1 - 2\delta + \delta^2} I(X; \distpara).
\]

Also, the binary divergence satisfies: $\beta(\mathbb{P}_{-1}, \mathbb{P}_1) \leq \frac{2 \delta^2}{(1 - \delta)^2}$.
Therefore, for any $\delta < 1$, through corollary 14 of~\cite{duchi2019lower}, we have:
\[
\sum_{j=1}^m \mathbb{P}(\hat{V}_j(\distpara) \neq V_j) \geq \frac{m}{2} \left( 1 - \sqrt{ \frac{7(2 - \delta)}{1 - \delta} \cdot \frac{2 \delta^2}{(1 - \delta)^2} \cdot \frac{n \cdot \varepsilon_{\mathrm{kl}}}{m} } \right).
\]

Now take $\delta^2 = c \cdot \min\left\{1, \frac{m}{n \cdot \min(\epsilon, \epsilon^2)} \right\}$, and for all privacy types assumed in the theorem we have $\varepsilon_{\mathrm{kl}} \asymp \min\{ \epsilon, \epsilon^2 \}$. The separation in Lemma~\ref{lem:assouad-app} is at least $\delta/2$, and we conclude by absorbing constants:
\[
\mathbb{E}[\|\distpara - \truepara\|_2^2] \geq c \cdot \max\left( \frac{m}{\min(\epsilon, \epsilon^2)}, 1 \right),
\]
as claimed. By Lemma~\ref{Relationship between MBP and LDP}, we know that $\epsilon$-LDP implies $\epsilon$-MBP. Therefore, the theorem follows.

\end{proof}

\subsection{Analysis for Theorem \ref{thm: upper_bound_utility_loss}}\label{sec:analysis_upper_bound_pc}

\textcolor{black}{\begin{proof} 
Assume that the transformation from $\truepara$ to $\distpara$ satisfies $\epsilon$-MBP.  
By Lemma~\ref{Relationship between MBP and LDP}, it follows that this mechanism also satisfies $\epsilon$-LDP.  
Therefore, by the proof of Proposition 4 of \cite{bhowmick2018protection}, we conclude that:
\[
\mathbb{E}[\|\distpara - \truepara\|_2^2] \leq c \cdot \left( \frac{m}{\epsilon} \vee \frac{m}{(e^\epsilon - 1)^2}\right),
\]
for some constant $c > 0$.
For $\epsilon \leq 1$, we use the fact that $(e^\epsilon - 1)^2 \approx \epsilon^2$, which yields:
\[
\mathbb{E}[\|\distpara - \truepara\|_2^2] = \mathcal{O}\left( \frac{m}{\epsilon^2} \right).
\]
For $\epsilon \geq 1$, since $(e^\epsilon - 1)^2 \gtrsim \epsilon^2$, we obtain:
\[
\mathbb{E}[\|\distpara - \truepara\|_2^2] = \mathcal{O}\left( \frac{m}{\epsilon} \right).
\]
Moreover, the matching lower bound is provided by Theorem~\ref{thm:lower-bound-eps-DP}, which implies:
\[
\mathbb{E}[\|\distpara - \truepara\|_2^2] = \Omega\left( \frac{m}{\min\{\epsilon^2, \epsilon\}} \right).
\]
Combining both bounds, we obtain the tight characterization:
\[
\mathbb{E}[\|\distpara - \truepara\|_2^2] = \Theta\left( \frac{m}{\min\{\epsilon^2, \epsilon\}} \right),
\]
as claimed.
\end{proof}}

\section{Analysis for Theorem \ref{upper-bound}}\label{sec:analysis_upper_bound}


\begin{assumption}[Bi-Lipschitz Condition of Gradients]\label{assump: two-sided Lipschitz}
We assume that the gradients satisfy the bi-Lipschitz condition:
\begin{equation}\label{eq: bi_lipschitz}
c_a \|\nabla L(\theta, X_1, Y) - \nabla L(\theta, X_2, Y)\| \le \|X_1 - X_2\| \le c_b \|\nabla L(\theta, X_1, Y) - \nabla L(\theta, X_2, Y)\|,
\end{equation}
\end{assumption}

\begin{assumption}[Self-bounded Regret]\label{assump: bounds_for_optimization_alg}
Let \( T \) represent the number of learning iterations by the attacker. We assume that the error values satisfy
\begin{equation}
c_0 \cdot T^{\frac{1}{2}} \le \sum_{t = 1}^T \|\nabla L(\theta, X_t, Y_t) - \nabla L(\theta, \tilde{X}, Y_t)\| \triangleq \Theta(T^{\frac{1}{2}}) \le c_2 \cdot T^{\frac{1}{2}},
\end{equation}
where \( c_0 \) and \( c_2 \) are constants, \( X_t \) represents the reconstructed data by the attacker in the \( t \)-th iteration, and \( \tilde{X} \) is the dataset used to generate the perturbed gradients.
\end{assumption}

Let \( X_{t,i}^{(k)} \) represent the \( i \)-th data reconstructed by the attacker for client \( k \) in the \( t \)-th iteration, and \( X_{i}^{(k)} \) represent the \( i \)-th data of client \( k \). Let \( D_{attacker}^{(k)} = \{(\tilde{X}_1^{(k)}, Y_1^{(k)}), \ldots, (\tilde{X}_{m_k}^{(k)}, Y_{m_k}^{(k)})\} \) represent the local data used to generate the distorted gradient \( \tilde{g} \). We can derive:

\begin{align*}
    &\frac{1}{|D_{attacker}^{(k)}|} \sum_{i = 1}^{|D_{attacker}^{(k)}|} \frac{1}{T} \sum_{t = 1}^T \|X_{t,i}^{(k)} - X_{i}^{(k)}\| \nonumber\\
    &\ge \frac{1}{|D_{attacker}^{(k)}|} \sum_{i = 1}^{|D_{attacker}^{(k)}|} \frac{1}{T} \sum_{t = 1}^T \|\tilde{X}_{i}^{(k)} - X_{i}^{(k)}\| - \frac{1}{|D_{attacker}^{(k)}|} \sum_{i = 1}^{|D_{attacker}^{(k)}|} \frac{1}{T} \sum_{t = 1}^T \|X_{t,i}^{(k)} - \tilde{X}_{i}^{(k)}\| \nonumber\\
    &\ge c_a \cdot \frac{1}{T} \sum_{t = 1}^T \frac{1}{|D_{attacker}^{(k)}|} \sum_{i = 1}^{|D_{attacker}^{(k)}|} \|\nabla L(\theta, \tilde{X}_{i}^{(k)}, Y_i^{(k)}) - \nabla L(\theta, X_{i}^{(k)}, Y_i^{(k)})\| \nonumber\\
    &\quad - c_b \cdot \frac{1}{T} \sum_{t = 1}^T \frac{1}{|D_{attacker}^{(k)}|} \sum_{i = 1}^{|D_{attacker}^{(k)}|} \|\nabla L(\theta, X_{t,i}^{(k)}, Y_i^{(k)}) - \nabla L(\theta, \tilde{X}_{i}^{(k)}, Y_i^{(k)})\| \nonumber\\
    &\ge c_a \cdot \frac{1}{T} \sum_{t = 1}^T \left\| \frac{1}{|D_{attacker}^{(k)}|} \sum_{i = 1}^{|D_{attacker}^{(k)}|} \left( \nabla L(\theta, \tilde{X}_{i}^{(k)}, Y_i^{(k)}) - \nabla L(\theta, X_{i}^{(k)}, Y_i^{(k)}) \right) \right\| \nonumber\\
    &\quad - c_b \cdot \frac{1}{T} \sum_{t = 1}^T \frac{1}{|D_{attacker}^{(k)}|} \sum_{i = 1}^{|D_{attacker}^{(k)}|} \|\nabla L(\theta, X_{t,i}^{(k)}, Y_i^{(k)}) - \nabla L(\theta, \tilde{X}_{i}^{(k)}, Y_i^{(k)})\| \nonumber\\
    & = c_a \cdot \Delta^{(k)} - c_b \cdot \frac{1}{T} \sum_{t = 1}^T \frac{1}{|D_{attacker}^{(k)}|} \sum_{i = 1}^{|D_{attacker}^{(k)}|} \|\nabla L(\theta, X_{t,i}^{(k)}, Y_i^{(k)}) - \nabla L(\theta, \tilde{X}_{i}^{(k)}, Y_i^{(k)})\| \label{eq: data_gap_terms},
\end{align*}

where the second inequality follows from Assumption \ref{assump: two-sided Lipschitz}, the third inequality follows from the triangle inequality (\(\|a\| + \|b\| \ge \|a + b\|\)), and the final inequality follows from the definition of gradient distortion.

Furthermore, from Assumption \ref{assump: bounds_for_optimization_alg}, we have:

\begin{align*}
    \sum_{t = 1}^T \|\nabla L(\theta, X_{t,i}^{(k)}, Y_i^{(k)}) - \nabla L(\theta, \tilde{X}_{i}^{(k)}, Y_i^{(k)})\| \le c_2 \cdot T^{1/2}.
\end{align*}

Thus, we further obtain:

\begin{align*}
    &c_a \cdot \Delta^{(k)} - c_b \cdot \frac{1}{T} \sum_{t = 1}^T \frac{1}{|D_{attacker}^{(k)}|} \sum_{i = 1}^{|D_{attacker}^{(k)}|} \|\nabla L(\theta, X_{t,i}^{(k)}, Y_i^{(k)}) - \nabla L(\theta, \tilde{X}_{i}^{(k)}, Y_i^{(k)})\| \\
    &\ge c_a \Delta^{(k)} - c_2 \cdot c_b T^{-\frac{1}{2}} \\
\end{align*}

By Hoeffding's inequality, with at least \( 1 - \delta \) probability, we have:

\begin{align}
    \left|\frac{1}{t} \sum_{i = 1}^{t} X_i - \E\left[\frac{1}{t} \sum_{i = 1}^{t} X_i\right]\right|\le \sqrt{\frac{\ln(2/\delta)}{2t}},
\end{align}
where $X_i\in [0,1]$.

Thus, with at least \( 1 - \gamma^{(k)} \) probability, we obtain:

\begin{align*}
   &\frac{1}{|D_{attacker}^{(k)}|} \sum_{i = 1}^{|D_{attacker}^{(k)}|} \frac{1}{T} \sum_{t = 1}^T \frac{\|X_{t,i}^{(k)} - X_{i}^{(k)}\|}{\tau} \\
   &\le \mathbb{E} \left[ \frac{1}{|D_{attacker}^{(k)}|} \sum_{i = 1}^{|D_{attacker}^{(k)}|} \frac{1}{T} \sum_{t = 1}^T \frac{\|X_{t,i}^{(k)} - X_{i}^{(k)}\|}{\tau} \right] + \sqrt{\frac{\ln(2/\gamma^{(k)})}{2|D_{attacker}^{(k)}|}} \\
\end{align*}

Note that we set
\begin{align}
    |D_{attacker}^{(k)}| = |D^{(k)}|.
\end{align}
Thus, we have
\begin{align*}
   &\frac{1}{|D_{attacker}^{(k)}|} \sum_{i = 1}^{|D_{attacker}^{(k)}|} \frac{1}{T} \sum_{t = 1}^T \frac{\|X_{t,i}^{(k)} - X_{i}^{(k)}\|}{\tau} \\
   &\le \mathbb{E} \left[ \frac{1}{|D_{attacker}^{(k)}|} \sum_{i = 1}^{|D_{attacker}^{(k)}|} \frac{1}{T} \sum_{t = 1}^T \frac{\|X_{t,i}^{(k)} - X_{i}^{(k)}\|}{\tau} \right] + \sqrt{\frac{\ln(2/\gamma^{(k)})}{2|D_{attacker}^{(k)}|}} \\
   & = 1 - \epsilon_p^{(k)} + \sqrt{\frac{\ln(2/\gamma^{(k)})}{2|D^{(k)}|}} \\
\end{align*}
where $|D_{attacker}^{(k)}|$  represents the size of the mini-batch (the training set of the attacker).

Furthermore, we obtain:

\begin{align*}
     1 - \epsilon_p^{(k)} + \sqrt{\frac{\ln(2/\gamma^{(k)})}{2|D^{(k)}|}}
     & \ge \frac{1}{|D^{(k)}|} \sum_{i = 1}^{|D^{(k)}|} \frac{1}{T} \sum_{t = 1}^T \frac{\|X_{t,i}^{(k)} - X_{i}^{(k)}\|}{\tau} \\
     &\ge \frac{c_a \Delta^{(k)} - c_2 \cdot c_b T^{-\frac{1}{2}}}{\tau}.
\end{align*}

Further simplifying, we obtain:

\begin{align*}
T^{\frac{1}{2}} \leq \frac{c_2 \cdot c_b}{{c_a} \cdot \Delta^{(k)} - \tau \cdot \left[1 - \epsilon_p^{(k)} + \sqrt{\frac{\ln(2/\gamma^{(k)})}{2|D^{(k)}|}} \right]}
\end{align*}

\textcolor{black}{Let the perturbation vector be $N=\distpara-\truepara \in R^m$, and $N=(n_1,n_2,\dots,n_m) $. We assume that the components of $N$ are bounded within [0,1], and are independent and identically distributed random variables. Our goal is to establish a lower bound on $\Delta^{(k)} = \| \distpara - \truepara \|_2 = \|N\|_2$ that holds with high probability.
We begin with the result from Theorem \ref{thm: upper_bound_utility_loss}:
\[
\mathbb{E}[\|\distpara - \truepara\|_2^2] = \Theta\left( \frac{m}{\min\{\epsilon^2, \epsilon\}} \right).
\]
So we have that $\mathbb{E}[\|N\|_2^2] = \Theta\left( \frac{m}{\min\{\epsilon^2, \epsilon\}} \right)$. Now, consider the sequence of i.i.d. random variables $Z_i = n_i^2$ for $i=1,\dots,m$. The sample mean of this sequence is $\frac{1}{m}\sum\limits_{i=1}^mZ_i=\frac{1}{m}\|N\|_2^2$. We can apply Hoeffding's inequality to bound the deviation of this sample mean:
\begin{align*}
    Pr\left(\mathbb{E}\left[\frac{1}{m}\sum_{i=0}^m Z_i\right]-\frac{1}{m}\sum\limits_{i=1}^mZ_i \ge t\right) \le \exp(-2mt^2),
\end{align*}
that also mean:
\begin{align*}
    Pr\left(\mathbb{E}\left[\frac{1}{m}\|N\|_2^2\right]-\frac{1}{m}\|N\|_2^2 \ge t\right) \le \exp(-2mt^2).
\end{align*}
This is equivalent to:
\begin{align*}
    Pr\left(\frac{1}{m}\|N\|_2^2\ge\mathbb{E}\left[\frac{1}{m}\|N\|_2^2\right] - t\right) \ge 1 -  \exp(-2mt^2).
\end{align*}
Substituting $\exp(-2mt^2) = \gamma^{(k)}$, with probability at least$1-\gamma^{(k)}$, we have:
\begin{align*}
    \|N\|_2^2\ge\mathbb{E}\left[\|N\|_2^2\right] - \sqrt{\frac{m\cdot ln(1/\gamma^{(k)})}{2}}
\end{align*}
Now we can express the lower bound for the gradient distortion $\Delta^{(k)} = \|N\|_2$:
\begin{align*}
    \Delta^{(k)}= \| \distpara - \truepara \|_2\ge\sqrt{\frac{m}{\min\{\epsilon^2, \epsilon\}}- \sqrt{\frac{m\cdot ln(1/\gamma^{(k)})}{2}}}
\end{align*}}

Substitute the existing results:
\begin{align*}
    T^{\frac{1}{2}} \leq \frac{c_2 \cdot c_b}{c_a \cdot \left( \sqrt{\frac{m}{\min\{\epsilon^2, \epsilon\}}- \sqrt{\frac{m\cdot ln(1/\gamma^{(k)})}{2}}} \right) - \tau \cdot \left[1 - \epsilon_p^{(k)} + \sqrt{\frac{\ln(2/\gamma^{(k)})}{2|D^{(k)}|}} \right]}
\end{align*}

\section{Analysis for Theorem \ref{lower-bound}}\label{sec: anlaysis_for_lower_bound}

\begin{assumption}[Bi-Lipschitz Condition]\label{assump: two_sided_Lipschitz}
   For any two datasets $d_1$ and $d_2$, assume that $c_a ||g(d_1) - g(d_2)||\le ||d_1 - d_2||\le c_b ||g(d_1) - g(d_2)||$.
\end{assumption}

\begin{assumption}[Self-bounded Regret]\label{assump: regret_bound}
Let \( T \) represent the number of learning iterations by the attacker. We assume that the error values satisfy
\begin{equation}
c_0 \cdot T^{p} \le \sum_{t = 1}^T \|\nabla L(\theta, X_t, Y_t) - \nabla L(\theta, \tilde{X}, Y_t)\| \triangleq \Theta(T^{p}) \le c_2 \cdot T^{p},
\end{equation}
where \( c_0 \) and \( c_2 \) are constants, \( X_t \) represents the reconstructed data by the attacker in the \( t \)-th iteration, and \( \tilde{X} \) is the dataset used to generate the perturbed gradients.
\end{assumption}

Assume that the semi-honest attacker uses an optimization algorithm to infer the original dataset of client $k$ based on the released parameter $\tilde{\theta}^{(k)}$. Let \( D_{attacker}^{(k)} = \{(\tilde{X}_1^{(k)}, Y_1^{(k)}), \ldots, (\tilde{X}_{m_k}^{(k)}, Y_{m_k}^{(k)})\} \) represent the dataset satisfying that  $g(D_{attacker}^{(k)}) = \tilde{\theta}^{(k)}$, and $\theta^{(k)}$ represent the dataset satisfying that $g( D^{(k)}) = \theta^{(k)}$, where $\theta^{(k)}$ represents the original parameter, $\tilde{\theta}^{(k)}$ represents the protected parameter. Let \( X_{t,i}^{(k)} \) represent the \( i \)-th data reconstructed by the attacker for client \( k \) in the \( t \)-th iteration, and \( X_{i}^{(k)} \) represent the \( i \)-th data of client \( k \). Let $\Delta^{(k)} = ||\frac{1}{|D^{(k)}|}\sum_{m = 1}^{|D^{(k)}|} \nabla L(\theta, X_{t,i}^{(k)}, Y_i^{(k)}) - \frac{1}{|D^{(k)}|}\sum_{m = 1}^{|D^{(k)}|}\nabla L(\theta, \tilde{X}_{i}^{(k)}, Y_i^{(k)}) ||$ represent the distortion of the parameter. The expected regret of the optimization algorithm in a total of $T$ ($ T > 0$) rounds is $\Theta(T^p)$.
If $\Delta^{(k)}\ge\frac{2c_2 c_b}{c_a}\cdot T^{p-1}$, then we have that
\begin{align}
        T^{1-p} \ge \frac{c_2 \cdot c_b}{
        4[\tau \cdot (1-\epsilon_p^{(k)}) + \sqrt{\frac{\ln(2/\gamma^{(k)})}{2|D^{(k)}|}}]
        - c_a \cdot \left( \sqrt{ \frac{m}{\min\{\epsilon, \epsilon^2\}} }
        + \sqrt{\frac{\ln(2/\gamma^{(k)})}{2|D^{(k)}|}} \right)
    }
\end{align}
where $c_a$ and $c_b$ are introduced in \pref{assump: two_sided_Lipschitz}, and $c_2$ is introduced in \pref{assump: regret_bound}.

\begin{proof}
Recall the privacy leakage of the client is defined as

\begin{equation}
    \epsilon_p^{(k)} = 1 - \mathbb{E}\left[\frac{1}{|D^{(k)}|}\sum_{i=1}^{|D^{(k)}|}\frac{1}{T}\sum_{t=1}^T \frac{\|\tilde{X}_{t,i}^{(k)} - X_{i}^{(k)}\|}{\tau}\right],
\end{equation}
To protect privacy, client $k$ selects a protection mechanism $M^{(k)}$, which maps the original parameter $\theta^{(k)}$ to a protected parameter $\tilde{\theta}^{(k)}$. After observing the protected parameter, a semi-honest adversary infers the private information using the optimization approaches. Let $X_{t}^{(k)}$ represent the reconstructed data at iteration $t$ using the optimization algorithm. Therefore the cumulative regret over $T$ rounds
\begin{align*}
    R(T) & = \sum_{t = 1}^{T} [||\nabla L(\theta, X_t, Y_t) - \tilde{\theta}|| - ||\nabla L(\theta, \tilde{X}, Y_t) - \tilde{\theta}||]\\
    & = \sum_{t = 1}^{T} [||\nabla L(\theta, X_t, Y_t) - \nabla L(\theta, \tilde{X}, Y_t)||]\\
    & = \Theta(T^p).
\end{align*}
Therefore, we have
\begin{align}\label{eq: regret_bounds}
   c_0\cdot T^p \le \sum_{t = 1}^{T}||\nabla L(\theta, X_t, Y_t) - \nabla L(\theta, \tilde{X}, Y_t)|| = \Theta(T^p) \le c_2\cdot T^p,
\end{align}
where $c_0$ and $c_2$ are constants independent of $T$.

Let $x$ and $\wtilde x$ represent two data. From our assumption, we have that
\begin{align}
    c_a ||g(x) - g(\wtilde x)||\le ||x - \wtilde x||\le c_b ||g(x) - g(\wtilde x)||.
\end{align}

Let $X_{t,i}^{(k)}$ represent the reconstructed $i$-th data at iteration $t$ using the optimization algorithm. We have that
\begin{align*}
    &\frac{1}{|D_{attacker}^{(k)}|} \sum_{i = 1}^{|D_{attacker}^{(k)}|} \frac{1}{T} \sum_{t = 1}^T \|X_{t,i}^{(k)} - X_{i}^{(k)}\| \nonumber\\
    &\ge \frac{1}{|D_{attacker}^{(k)}|} \sum_{i = 1}^{|D_{attacker}^{(k)}|} \frac{1}{T} \sum_{t = 1}^T \|\tilde{X}_{i}^{(k)} - X_{i}^{(k)}\| - \frac{1}{|D_{attacker}^{(k)}|} \sum_{i = 1}^{|D_{attacker}^{(k)}|} \frac{1}{T} \sum_{t = 1}^T \|X_{t,i}^{(k)} - \tilde{X}_{i}^{(k)}\| \nonumber\\
    &\ge c_a \cdot \frac{1}{T} \sum_{t = 1}^T \frac{1}{|D_{attacker}^{(k)}|} \sum_{i = 1}^{|D_{attacker}^{(k)}|} \|\nabla L(\theta, \tilde{X}_{i}^{(k)}, Y_i^{(k)}) - \nabla L(\theta, X_{i}^{(k)}, Y_i^{(k)})\| \nonumber\\
    &\quad - c_b \cdot \frac{1}{T} \sum_{t = 1}^T \frac{1}{|D_{attacker}^{(k)}|} \sum_{i = 1}^{|D_{attacker}^{(k)}|} \|\nabla L(\theta, X_{t,i}^{(k)}, Y_i^{(k)}) - \nabla L(\theta, \tilde{X}_{i}^{(k)}, Y_i^{(k)})\| \nonumber\\
    &\ge c_a \cdot \frac{1}{T} \sum_{t = 1}^T \left\| \frac{1}{|D_{attacker}^{(k)}|} \sum_{i = 1}^{|D_{attacker}^{(k)}|} \left( \nabla L(\theta, \tilde{X}_{i}^{(k)}, Y_i^{(k)}) - \nabla L(\theta, X_{i}^{(k)}, Y_i^{(k)}) \right) \right\| \nonumber\\
    &\quad - c_b \cdot \frac{1}{T} \sum_{t = 1}^T \frac{1}{|D_{attacker}^{(k)}|} \sum_{i = 1}^{|D_{attacker}^{(k)}|} \|\nabla L(\theta, X_{t,i}^{(k)}, Y_i^{(k)}) - \nabla L(\theta, \tilde{X}_{i}^{(k)}, Y_i^{(k)})\| \nonumber\\
\end{align*}

where the second inequality is due to
\begin{align*}
&\frac{1}{|D_{attacker}^{(k)}|} \sum_{i = 1}^{|D_{attacker}^{(k)}|} \frac{1}{T} \sum_{t = 1}^T \|\tilde{X}_{i}^{(k)} - X_{i}^{(k)}\| \nonumber\\
&\ge c_a\cdot\frac{1}{T} \sum_{t = 1}^T \frac{1}{|D_{attacker}^{(k)}|} \sum_{i = 1}^{|D_{attacker}^{(k)}|} \|\nabla L(\theta, \tilde{X}_{i}^{(k)}, Y_i^{(k)}) - \nabla L(\theta, X_{i}^{(k)}, Y_i^{(k)})\| \nonumber\\
\end{align*}
and
\begin{align*}
&\frac{1}{|D_{attacker}^{(k)}|} \sum_{i = 1}^{|D_{attacker}^{(k)}|} \frac{1}{T} \sum_{t = 1}^T \|X_{t,i}^{(k)} - \tilde{X}_{i}^{(k)}\| \nonumber\\
&\le c_b \cdot \frac{1}{T} \sum_{t = 1}^T \frac{1}{|D_{attacker}^{(k)}|} \sum_{i = 1}^{|D_{attacker}^{(k)}|} \|\nabla L(\theta, X_{t,i}^{(k)}, Y_i^{(k)}) - \nabla L(\theta, \tilde{X}_{i}^{(k)}, Y_i^{(k)})\| \nonumber\\
\end{align*}

From the definition of distortion extent, we know that
\begin{align}
   \Delta^{(k)} = ||\frac{1}{|D^{(k)}|}\sum_{i = 1}^{|D^{(k)}|} \nabla L(\theta, X_{t,i}^{(k)}, Y_i^{(k)}) - \frac{1}{|D^{(k)}|}\sum_{i = 1}^{|D^{(k)}|}\nabla L(\theta, \tilde{X}_{i}^{(k)}, Y_i^{(k)}) ||.
\end{align}

We set
\begin{align}
    |D_{attacker}^{(k)}| = |D^{(k)}|.
\end{align}

By Hoeffding's inequality, with at least \( 1 - \delta \) probability, we have:

\begin{align}
    \left|\frac{1}{t} \sum_{i = 1}^{t} X_i - \E\left[\frac{1}{t} \sum_{i = 1}^{t} X_i\right]\right|\le \sqrt{\frac{\ln(2/\delta)}{2t}},
\end{align}
where $X_i\in [0,1]$.

Thus, with at least \( 1 - \gamma^{(k)} \) probability, we obtain:

\begin{align*}
   &\frac{1}{|D_{attacker}^{(k)}|} \sum_{i = 1}^{|D_{attacker}^{(k)}|} \frac{1}{T} \sum_{t = 1}^T \frac{\|X_{t,i}^{(k)} - X_{i}^{(k)}\|}{\tau} \\
   &\le \mathbb{E} \left[ \frac{1}{|D_{attacker}^{(k)}|} \sum_{i = 1}^{|D_{attacker}^{(k)}|} \frac{1}{T} \sum_{t = 1}^T \frac{\|X_{t,i}^{(k)} - X_{i}^{(k)}\|}{\tau} \right] + \sqrt{\frac{\ln(2/\gamma^{(k)})}{2|D_{attacker}^{(k)}|}} \\
\end{align*}

Note that we set
\begin{align}
    |D_{attacker}^{(k)}| = |D^{(k)}|.
\end{align}
Thus, we have
\begin{align*}
   &\frac{1}{|D_{attacker}^{(k)}|} \sum_{i = 1}^{|D_{attacker}^{(k)}|} \frac{1}{T} \sum_{t = 1}^T \frac{\|X_{t,i}^{(k)} - X_{i}^{(k)}\|}{\tau} \\
   &\le \mathbb{E} \left[ \frac{1}{|D_{attacker}^{(k)}|} \sum_{i = 1}^{|D_{attacker}^{(k)}|} \frac{1}{T} \sum_{t = 1}^T \frac{\|X_{t,i}^{(k)} - X_{i}^{(k)}\|}{\tau} \right] + \sqrt{\frac{\ln(2/\gamma^{(k)})}{2|D_{attacker}^{(k)}|}} \\
   & = 1 - \epsilon_p^{(k)} + \sqrt{\frac{\ln(2/\gamma^{(k)})}{2|D^{(k)}|}} \\
\end{align*}
where $|D_{attacker}^{(k)}|$  represents the size of the mini-batch (the training set of the attacker).

Therefore, we have
\begin{align*}
    & \tau \cdot (1-\epsilon_p^{(k)}) + \sqrt{\frac{\ln(2/\gamma^{(k)})}{2|D^{(k)}|}} \\
    & \ge \tau \cdot (1-\epsilon_p^{(k)}) = \mathbb{E}\left[\frac{1}{|D^{(k)}|}\sum_{i=1}^{|D^{(k)}|}\frac{1}{T}\sum_{t=1}^T \|\tilde{X}_{t,i}^{(k)} - X_{i}^{(k)}\|\right]\\
    &\ge  c_a\Delta^{(k)} - c_b\cdot\frac{1}{T}\sum_{t = 1}^{T}\sum_{i = 1}^{|D^{(k)}|}||\nabla L(\theta, X_t, Y_t) - \nabla L(\theta, \tilde{X}, Y_t)||\\
    &\ge c_a\Delta^{(k)} - c_2\cdot c_b T^{p-1}\\
    &\ge\frac{1}{2}\max\{c_a\Delta^{(k)}, c_2\cdot c_b T^{p-1}\} \\
    &\ge \frac{c_a\Delta^{(k)} + c_2\cdot c_b T^{p-1}}{4}.
\end{align*}

Therefore, we have that
\begin{align}
    T^{1-p}\ge\frac{c_2\cdot c_b}{4[\tau \cdot (1-\epsilon_p^{(k)}) + \sqrt{\frac{\ln(2/\gamma^{(k)})}{2|D^{(k)}|}}] -c_a\Delta^{(k)}}
\end{align}

\textcolor{black}{Let the perturbation vector be $N=\distpara-\truepara \in R^m$, and $N=(n_1,n_2,\dots,n_m) $. We assume that the components of $N$ are bounded within [0,1], and are independent and identically distributed random variables. Our goal is to establish a lower bound on $\Delta^{(k)} = \| \distpara - \truepara \|_2 = \|N\|_2$ that holds with high probability.
We begin with the result from Theorem \ref{thm: upper_bound_utility_loss}:
\[
\mathbb{E}[\|\distpara - \truepara\|_2^2] = \Theta\left( \frac{m}{\min\{\epsilon^2, \epsilon\}} \right).
\]
So we have that $\mathbb{E}[\|N\|_2^2] = \Theta\left( \frac{m}{\min\{\epsilon^2, \epsilon\}} \right)$. Now, consider the sequence of i.i.d. random variables $Z_i = n_i^2$ for $i=1,\dots,m$. The sample mean of this sequence is $\frac{1}{m}\sum\limits_{i=1}^mZ_i=\frac{1}{m}\|N\|_2^2$. We can apply Hoeffding's inequality to bound the deviation of this sample mean:
\begin{align*}
    P\left(\mathbb{E}\left[\frac{1}{m}\sum_{i=0}^m Z_i\right]-\frac{1}{m}\sum\limits_{i=1}^mZ_i \ge t\right) \le \exp(-2mt^2),
\end{align*}
that also mean:
\begin{align*}
    P\left(\mathbb{E}\left[\frac{1}{m}\|N\|_2^2\right]-\frac{1}{m}\|N\|_2^2 \ge t\right) \le \exp(-2mt^2).
\end{align*}
This is equivalent to:
\begin{align*}
    P\left(\frac{1}{m}\|N\|_2^2\ge\mathbb{E}\left[\frac{1}{m}\|N\|_2^2\right] - t\right) \ge 1 -  \exp(-2mt^2).
\end{align*}
Substituting $\exp(-2mt^2) = \gamma^{(k)}$, with probability at least$1-\gamma^{(k)}$, we have:
\begin{align*}
    \|N\|_2^2\le\mathbb{E}\left[\|N\|_2^2\right] + \sqrt{\frac{m\cdot ln(1/\gamma^{(k)})}{2}}
\end{align*}
Now we can express the lower bound for the gradient distortion $\Delta^{(k)} = \|N\|_2$:
\begin{align*}
    \Delta^{(k)}=\| \distpara - \truepara \|_2\le\sqrt{\frac{m}{\min\{\epsilon^2, \epsilon\}}+ \sqrt{\frac{m\cdot ln(1/\gamma^{(k)})}{2}}}
\end{align*}}

Substitute the existing results:
\begin{align}
    T^{1-p} \ge \frac{c_2 \cdot c_b}{
        4[\tau \cdot (1-\epsilon_p^{(k)}) + \sqrt{\frac{\ln(2/\gamma^{(k)})}{2|D^{(k)}|}}]
        - c_a \cdot \left( \sqrt{\frac{m}{\min\{\epsilon^2, \epsilon\}}+ \sqrt{\frac{m\cdot ln(1/\gamma^{(k)})}{2}}} \right)
    }
\end{align}
\end{proof}

{
\color{black}
\section{Practical Estimation for Maximum Bayesian Privacy}\label{sec: anlaysis_for_estimation_mbp}

\begin{assumption}
[Maximality of Leakage at True Parameter]
     \label{assump: maximality_of_leakage_at_true_parameter} 
We assume that the amount of Bayesian privacy leaked from the true parameter is maximum overall possible parameters. That is, for any $w \in \mathcal{W}$, we have that
\begin{align}
    \frac{f_{D|W}(d|w^*)}{f_D(d)} \ge \frac{f_{D|W}(d|w)}{f_D(d)},
\end{align}
where $w^*$ represents the true model parameter.
\end{assumption}

Recall that the $\epsilon-$Maximum Bayesian Privacy(MBP) guarantee, as defined in Definition \ref{def:mbp}, is given by:
\begin{align*}
    e^{-\epsilon} \leq \frac{f_{D|W}(d \mid w)}{f_D(d)} \leq e^{\epsilon}.
\end{align*}

This is equivalent to stating that for any possible released information $w \in \mathcal{W}$ and  any private data $d \in \mathcal{D}$, the following holds:
\begin{align*}
    \left| \log\left(\frac{f_{D|W}(d|w)}{f_D(d)}\right) \right| \le \epsilon .
\end{align*}

The MBP parameter $\epsilon$ (also denoted as $\xi$ in Definition 5.2) is therefore the tightest such bound:
\begin{align*}
    \epsilon = \max_{w \in \mathcal{W}, d \in \mathcal{D}} \left| \log\left(\frac{f_{D|W}(d|w)}{f_D(d)}\right) \right| .
\end{align*}

From the Assumption \ref{assump: maximality_of_leakage_at_true_parameter}, we have that:
\begin{align}
    \epsilon = \max_{d \in \mathcal{D}} \left| \log\left(\frac{f_{D|W}(d|w^*)}{f_D(d)}\right) \right| .
\end{align}
Consequently, the task of estimating the MBP parameter $\epsilon$ reduces to the more specific problem of estimating the posterior probability density $f_{D|W}(d|w^)$ given the worst-case release. Based on its estimate, $\hat f_{D|W}(d|w^*)$, the estimated privacy parameter $\hat\epsilon$ is given by:
\begin{align}
    \hat\epsilon = \max_{d \in \mathcal{D}} \left| \log\left(\frac{\hat f_{D|W}(d|w^*)}{f_D(d)}\right) \right| .
\end{align}
\subsection{Estimating the value of $\hat f_{D|W}(d|w^*)$}\label{sec:Estimating the value of f_{D|W}(d|w^*)}

The calculation of the MBP parameter $\epsilon$ hinges on estimating the conditional probability $f_{D|W}(d|w^*)$. To this end, we adopt the attack simulation methodology described in Section 7.3 of \cite{zhang2024meta}. 

The core idea of this algorithm is to estimate the conditional probability by generating perturbed data and comparing it with the original data during the parameter update process. By performing multiple iterations and accumulating counts, the algorithm provides an estimation of the conditional probability.This algorithm takes a mini-batch of data $d$, with batch size S, and a set of parameters $\{w_m\}_{m=1}^M$ as input. 

The key steps of the algorithm are as follows:
 
\begin{itemize}
    \item \textbf{Compute Gradients and Update Parameters:} Using mini-batch samples, calculate the gradient of the loss function $\mathcal{L}$ with respect to each parameter $w_m$, and update the parameters using gradient descent.

    \item \textbf{Simulated Data Reconstruction:} For each parameter $w_m$, perform $T_{sim}$ iterations to generate perturbed data by using the DLG method.

    \item \textbf{Success Criterion:} Determine if the original data is successfully recovered by checking if the difference between the original data $z_d$ and the perturbed data is less than a predefined threshold $\Omega$.

    \item \textbf{Estimate Conditional Probability:} Estimate the conditional probability $\hat{f}_{D|W}(d|w_m)$ by dividing the count of successfully recovered instances by the total number of attempts (iterations $T$ multiplied by batch size $S$).
\end{itemize}

This process yields the estimated conditional probability $\hat{f}_{D|W}(d|w_m)$, a key component for calculating the MBP parameter. As analyzed in Appendix \ref{sec:Estimation Error of MBP}, the estimation error between the estimated value and the true value diminishes as the number of trials $T_{sim}$ increases, thereby ensuring the reliability of our MBP evaluation.

\subsection{Analysis for Theorem \ref{thm: the_estimation_error_of_epsilon}}\label{sec:Estimation Error of MBP}
In this section, We leverage Lemma  \ref{lemma: reliability_of_simulated_reconstruction} to derive Theorem \ref{thm: the_estimation_error_of_epsilon}.
\begin{lem}
[Reliability of Simulated Reconstruction, adapted from \cite{zhang2024meta}] \label{lemma: reliability_of_simulated_reconstruction} 
With probability at least $1-2\exp\left(-\frac{\beta^2T_{sim}\kappa_1(d)}{3}\right)$, we have that
\begin{align}
    |\hat{\kappa}_1(d) - \kappa_1(d)| \leq \beta \kappa_1(d),
\end{align}
where $\kappa_1(d) = f_{D|W}(d|w^*)$, $\beta$ represent error tolerance or precision, $T_{sim}$ represent the total number of simulation rounds.
\end{lem}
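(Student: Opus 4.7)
The plan is to recognize that $\hat{\kappa}_1(d)$ is an empirical mean of independent Bernoulli trials, so the lemma reduces to a direct application of the standard multiplicative Chernoff bound. No new analytic machinery is needed beyond justifying the i.i.d.\ structure of the simulation outputs.

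The first step is to formalize what $\hat{\kappa}_1(d)$ actually counts. Following the attack-simulation algorithm of Section~\ref{sec:Estimating the value of f_{D|W}(d|w^*)}, each of the $T_{sim}$ rounds runs an independent DLG-style reconstruction at the parameter $w^*$ and produces a success indicator $X_i \in \{0,1\}$, with $X_i = 1$ precisely when the reconstructed datum lies within the threshold $\Omega$ of $d$. By the definition of the simulator, $\Pr[X_i = 1] = f_{D|W}(d \mid w^*) = \kappa_1(d)$, and because each round is executed with a fresh random seed and fresh DLG initialization, the $\{X_i\}_{i=1}^{T_{sim}}$ are i.i.d.\ Bernoulli$(\kappa_1(d))$. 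Hence $\hat{\kappa}_1(d) = \tfrac{1}{T_{sim}}\sum_{i=1}^{T_{sim}} X_i$ is an unbiased estimator of $\kappa_1(d)$.

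The second step is to invoke the two-sided multiplicative Chernoff bound. For $S = \sum_{i=1}^{T_{sim}} X_i$ with $\mu := \mathbb{E}[S] = T_{sim}\kappa_1(d)$ and $\beta \in (0,1)$, the classical inequality gives
\begin{equation*}
\Pr\!\bigl[\,|S - \mu| \ge \beta\mu\,\bigr] \;\le\; 2\exp\!\Bigl(-\tfrac{\beta^2 \mu}{3}\Bigr).
\end{equation*}
Dividing the event inside the probability by $T_{sim}$ and substituting $\mu = T_{sim}\kappa_1(d)$ yields
\begin{equation*}
\Pr\!\bigl[\,|\hat{\kappa}_1(d) - \kappa_1(d)| \ge \beta\kappa_1(d)\,\bigr]
\;\le\; 2\exp\!\Bigl(-\tfrac{\beta^2 T_{sim}\kappa_1(d)}{3}\Bigr),
\end{equation*}
and taking complements delivers the stated high-probability guarantee.

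The main obstacle, though modest, is the i.i.d.\ justification: if the simulator reuses internal state across rounds (e.g., warm-starts the next DLG run from the previous output), the success indicators become dependent and the Chernoff bound does not apply directly. I would address this by stating explicitly the independence assumption used by the simulator, matching the algorithm description in Section~\ref{sec:Estimating the value of f_{D|W}(d|w^*)}. A secondary technical caveat is the standard domain restriction $\beta \in (0,1)$ required by the multiplicative form; for $\beta \ge 1$, the right-tail portion still holds via the one-sided multiplicative Chernoff bound while the left-tail event becomes vacuous (since $\hat{\kappa}_1(d) \ge 0$), so the stated inequality can be recovered for all $\beta > 0$ by combining the two cases, which I would note briefly rather than expand.
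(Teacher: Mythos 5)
Your proof is correct in its main regime, but note that the paper itself does not prove this lemma at all: it is stated as ``adapted from \cite{zhang2024meta}'' and imported as a black box, with the \texttt{proof} environment that follows it actually proving Theorem~\ref{thm: the_estimation_error_of_epsilon} by \emph{using} the lemma. So there is no in-paper argument to compare against; what you have supplied is the standard derivation that presumably underlies the cited result. Your two steps are the right ones: identifying $\hat{\kappa}_1(d)$ as an empirical frequency of i.i.d.\ Bernoulli$(\kappa_1(d))$ success indicators (consistent with the paper's description of the estimator as a success count divided by the number of attempts), and then applying the two-sided multiplicative Chernoff bound $\Pr[|S-\mu|\ge\beta\mu]\le 2\exp(-\beta^2\mu/3)$ with $\mu=T_{sim}\kappa_1(d)$. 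Your flag about the independence of simulation rounds is a legitimate hypothesis that the paper leaves implicit. One small correction to your closing caveat: for $\beta>1$ the standard one-sided Chernoff bound only yields an exponent of order $\beta\mu/3$ rather than $\beta^2\mu/3$, so the stated inequality is \emph{not} recoverable verbatim for all $\beta>0$; the lemma should be read as holding for $\beta\in(0,1)$, which is the only regime relevant to the downstream use (where $\beta$ is a small error tolerance that shrinks as $T_{sim}$ grows).
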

\begin{proof}

Using the inequality for the estimation error of $\kappa_1(d)$ provided by Lemma \ref{lemma: reliability_of_simulated_reconstruction}, we can derive its relationship with $T_{sim}$

Lemma \ref{lemma: reliability_of_simulated_reconstruction} actually tells us that for a specific data point $d$, we have that:
\begin{align*}
    Pr \left[|\hat{\kappa}_1(d) - \kappa_1(d)| \leq \beta \kappa_1(d)\right] \ge 1-2\exp\left(-\frac{\beta^2T_{sim}\kappa_1(d)}{3}\right)
\end{align*}

To obtain an explicit convergence rate, we set the confidence level to $1-\delta$, where $\delta = 2\exp\left(-\frac{\beta^2T_{sim}\kappa_1(d)}{3}\right)$. By inverting this equation, we can solve for $\beta$:
\begin{align}
    \beta = \sqrt{\frac{3\ln(2/\delta)}{T_{sim}\kappa_1(d)}}.
\end{align}

Plugging this expression for $\beta$ into the inequality $|\hat{\kappa}1(d) - \kappa_1(d)| \le \beta \kappa_1(d)$, we obtain a bound on the error related to $T_{sim}$ with at least $1-\delta$ probability:
\begin{align*}
    &|\hat{\kappa}_1(d) - \kappa_1(d)| \le \sqrt{\frac{3\ln(2/\delta)}{T_{sim}\kappa_1(d)}} \kappa_1(d)
    \nonumber\\
    & =  \sqrt{\frac{3\ln(2/\delta) \kappa_1(d)}{T_{sim}}}.
\end{align*}

For a fixed dataset size $|\mathcal{D}|$, confidence level $\delta$, the terms on the right-hand side, apart from $T_{sim}$ and $\kappa_1(d)$, are constants, we conclude that the error of $\hat{\kappa}_1(d)$ as:

\begin{equation}\label{eq: error_of_kappa}
    |\hat{\kappa}_1(d) - \kappa_1(d)| = O\left(\sqrt{\frac{\ln(2/\delta)\kappa_1(d)}{T_{sim}}}\right) .
\end{equation}

Next, we leverage the estimation error of $\hat{\kappa}_1(d)$ to analyze the estimation error of $\hat{\epsilon}$, which is $|\hat{\epsilon} - \epsilon|$. By definition, we have:
\begin{align*}
    &|\hat\epsilon-\epsilon| = \left| \max_{d \in \mathcal{D}} \left| \log\left(\frac{\hat \kappa_1(d)}{f_D(d)}\right) \right| - \max_{d \in \mathcal{D}} \left| \log\left(\frac{\kappa_1(d)}{f_D(d)}\right) \right| \right|.
\end{align*}

Using the inequality $| \max\limits_i a_i- \max\limits_i b_i | \le \max\limits_i|a_i-b_i|$, we can further bound this expression:
\begin{align*}
    &|\hat\epsilon-\epsilon| \le \left| \max_{d \in \mathcal{D}}\left| \left|\log\left(\frac{\hat \kappa_1(d)}{f_D(d)}\right)\right| - \left|\log\left(\frac{\kappa_1(d)}{f_D(d)}\right)\right| \right| \right|\nonumber\\
    &\le \left| \max_{d \in \mathcal{D}}\left| \log\left(\frac{\hat \kappa_1(d)}{f_D(d)}\right) - \log\left(\frac{\kappa_1(d)}{f_D(d)}\right) \right| \right|
    \nonumber\\
    & =\max\limits_{d \in \mathcal{D}} \left| \log(\hat{\kappa}_1(d)) - \log(\kappa_1(d)) \right|.
\end{align*}

The second inequality holds due to the triangle inequality $| |a| - |b| | \le |a-b|$.

By the Mean Value Theorem, for each $d$, there exists a value $\gamma(d)$ between $\hat{\kappa}_1(d)$ and $\kappa_1(d)$ such that:
\begin{align*}
\left| \log(\hat{\kappa}_1(d)) - \log(\kappa_1(d)) \right| = \left| \frac{1}{\gamma(d)} (\hat{\kappa}_1(d) - \kappa_1(d)) \right|.
\end{align*}

Referencing \pref{eq: error_of_kappa}, as the number of trials $T_{sim} \to \infty$, the estimator $\hat{\kappa}_1(d)$ converges in probability to its true value $\kappa_1(d)$. Since $\gamma(d)$ lies between $\hat{\kappa}_1(d)$ and $\kappa_1(d)$, it follows from the Squeeze Theorem that $\gamma(d) \to \kappa_1(d)$.

This convergence implies that for a sufficiently large $T_{sim}$, the term $1/\gamma(d)$ is asymptotically equivalent to $1/\kappa_1(d)$. Consequently, we can express $1/\gamma(d)$ using Big-O notation as $O(1/\kappa_1(d))$:
\begin{align*}
    \left| \log(\hat{\kappa}_1(d)) - \log(\kappa_1(d)) \right| = O\left(\left|\frac{\hat{\kappa}_1(d) - \kappa_1(d)}{\kappa_1(d)}\right|\right).
\end{align*}

Since we have shown that $|\hat{\kappa}_1(d) - \kappa_1(d)| = O(\sqrt{\frac{\ln(2/\delta)\kappa_1(d)}{T_{sim}}})$. With at least $1 -\delta$ probability, it follows that:

\begin{align*}
\max\limits_{d \in \mathcal{D}} \left| \log(\hat{\kappa}_1(d)) - \log(\kappa_1(d)) \right| = O\left(\max\limits_{d \in \mathcal{D}} \sqrt{\frac{\ln(2/\delta)}{T_{sim}\cdot\kappa_1(d)}}\right) = O\left(\sqrt{\frac{\ln(2/\delta)}{T_{sim}\cdot\min\limits_{d \in \mathcal{D}}\kappa_1(d)}}\right).
\end{align*}

For a specific dataset $\mathcal{D}$ and model, $\min\limits_{d\in\mathcal{D}}\kappa_1(d)$ is regarded as a constant. We arrive at the final error bound for the estimation of $\hat{\epsilon}$ with at least $1 -\delta$ probability :

\begin{align}
    &|\hat\epsilon-\epsilon| = O\left(\sqrt{\frac{\ln(2/\delta)}{T_{sim}}}\right)
\end{align}

From the definition of Big O, we have:
\begin{align*}
    \hat\epsilon-\zeta\le\epsilon\le\hat\epsilon+\zeta, 
\end{align*}
where $\zeta = c\cdot\left(\sqrt{\frac{\ln(2/\delta)}{T_{sim}}}\right)$.

We now analyze the implications of this estimation error on the main theoretical results of our paper. Recall that Theorem \ref{thm: upper_bound_utility_loss} provides a tight bound for Protection Complexity based on the true MBP parameter $\epsilon$:
\begin{align*}
    \E[\|\distpara - \truepara\|_2^2] = O \left(\frac{m}{\min\{\epsilon^2, \epsilon\}}\right),
\end{align*}
and
\begin{align*}
    \E[\|\distpara - \truepara\|_2^2] = \Omega \left(\frac{m}{\min\{\epsilon^2, \epsilon\}}\right).
\end{align*}

Since the protection complexity is a monotonically decreasing function of $\epsilon$, we can bound it using the extremities of this confidence interval. The lower bound on protection complexity is achieved when $\epsilon$ is at its maximum $\hat\epsilon+\zeta$, and the upper bound is achieved when $\epsilon$ is at its minimum $\hat\epsilon-\zeta$.

Substituting this relationship into the bound from Theorem \ref{thm: upper_bound_utility_loss}, we obtain a modified characterization of the protection complexity:

\begin{align}
    \E[\|\distpara - \truepara\|_2^2] = \Omega \left(\frac{m}{\min\{\left(\hat\epsilon\ + \zeta\right)^2, \hat\epsilon+ \zeta\}}\right),
\end{align}
and
\begin{align}
    \E[\|\distpara - \truepara\|_2^2] = O \left(\frac{m}{\min\{\left(\hat\epsilon- \zeta\right)^2, \hat\epsilon- \zeta\}}\right).
\end{align}
\end{proof}
}

\end{document}